\newcommand{\scheme}{PriRoAgg\xspace}
\newtheorem{mydef}{Definition}
\newtheorem{mythm}{Theorem}
\begin{document}

\title{PriRoAgg: Achieving Robust Model Aggregation with Minimum Privacy Leakage for Federated Learning}

\author{Sizai Hou,
Songze Li,
Tayyebeh Jahani-Nezhad, and
Giuseppe Caire
\thanks{
Sizai Hou is with Hong Kong University of Science and Technology, Hong Kong, China (e-mail: shouac@connect.ust.hk).

Songze Li is with the School of Cyber Science and Engineering, Southeast University, Nanjing, China; Engineering Research Center of Blockchain Application, Supervision and Management (Southeast University), Ministry of Education (e-mail: songzeli@seu.edu.cn).

Tayyebeh Jahani-Nezhad and Giuseppe Caire are with the Department of Electrical Engineering and Computer Science, Technische Universität Berlin, 10587 Berlin, Germany (e-mail: \{t.jahani.nezhad, caire\}@tu-berlin.de).
}%

\thanks{Manuscript received July 09, 2024; revised Jan 29, 2025; accepted May 27, 2025.}%

}



\maketitle

\begin{abstract}
Federated learning (FL), as a promising machine learning paradigm for large-scale distributed data, faces two security challenges of privacy and robustness: the transmitted model updates potentially leak sensitive user information, and the lack of central control over local model updates leaves the global model susceptible to malicious attacks. 
Current solutions attempting to address both problems under the one-server FL setting fall short in the following aspects: 1) design for simple validity checks that are insufficient against advanced attacks (e.g., checking norm of individual update); and 2) have partial privacy leakage for more complicated robust aggregation algorithms (e.g., distances between model updates are leaked for multi-Krum).
In this work, we formalize a novel security notion of \textit{aggregated privacy} that characterizes the minimum amount of user information, in the form of aggregated statistics of users' updates, that is necessary to be revealed to accomplish more advanced robust aggregation.
We develop a general framework \scheme, utilizing Lagrange coded computing and distributed zero-knowledge proof, to execute a wide range of robust aggregation algorithms while satisfying aggregated privacy.
As concrete instantiations of \scheme, we construct two secure and robust protocols based on state-of-the-art robust algorithms, for which we provide full theoretical analyses on security and complexity. Extensive experiments are conducted for these protocols, demonstrating their robustness against various model integrity attacks, and their efficiency advantages over baselines.
\end{abstract}

\begin{IEEEkeywords}
Federated Learning, Privacy Preservation, Byzantine Resilience, Secure and Robust  Aggregation.
\end{IEEEkeywords}

\section{Introduction}

Federated learning (FL) is a trending collaborative machine learning (ML)  paradigm, where numerous devices collaboratively train a model on their local datasets \cite{mcmahan2017communication}. 
FL was designed to protect user data privacy, as the local model training is performed on each user's private data, and only the model updates are uploaded to the server for aggregation, hence, the server does not directly access users' data.
However, some recent research works have revealed that 
the uploaded model updates may leak users' data via attacks such as gradient inversion~\cite{zhu2019deep, geiping2020inverting, yin2021see, lyu2020survey}. 
On the other hand, due to FL's distributed nature, the global model's integrity may also be compromised by malicious users manipulating their local models.
A variety of works have proposed attacks that can degenerate global model performance, or implement backdoors to cause misclassifications \cite{rigaki2023survey, barreno2010security, xue2020machine}. 

Most of the existing works focus on developing protocols that solve either the privacy problem or the model integrity problem, whereas there is a scarcity of studies that simultaneously address both problems. The main reason for that is the conflicting nature of the two goals: robust aggregation protocols for model integrity require the central server to aggregate based on the distribution of user updates ~\cite{ghosh2020communication, karimireddy2020byzantine, velicheti2021secure, orekondy2019prediction, normthreshold, krum, guerraoui2018hidden, chen2021pois, zhao2020pdgan}, while secure aggregation protocols encrypt or mask the user updates so that they are indistinguishable statistically \cite{wei2020federated, bell2020secure, truex2020ldp, naseri2020local, el2022differential, so2022lightsecagg, jahani2023swiftagg}. 
To compatibly provide security in robust aggregation protocols, commonly used techniques in secure aggregation face many limitations that hinder the combination of secure techniques and robust protocols. For example, homomorphic encryption suffers from high complexity \cite{fang2021privacy, chen2021homomorphic, ma2022privacy, xia2024byzantine} that results in impractical implementations. Differential privacy is inherently imprecise and its masking renders direct computation of privacy data statistics infeasible ~\cite{lyu2022privacy, el2022differential} . 
A few works provide robustness and security simultaneously, but they are designed for rather simple
defensive mechanisms, such as cosine similarity \cite{chowdhury2021eiffel} and norm bounding \cite{lycklama2023rofl}.  More importantly, their detection of malicious behaviors relies on single user's self-check instead of finding overall statistics. 
In works like \cite{brea, jahani2023byzantine} that attempt to conceal the local updates from the server, there is additional privacy leakage from the intermediate computation results. Some other privacy-preserving protocols enable the execution of robust algorithms but assume stronger settings, such as non-colluding multi-server scenarios \cite{ma2022shieldfl, rathee2023elsa, li2024efficiently}.
Therefore, it remains a challenging problem to strengthen powerful robust algorithms with provable privacy under the typical single-server FL paradigm. 

To this end, this paper aims to develop a secure aggregation framework, with minimal privacy leakage, for a general class of robust model aggregation algorithms. To rigorously define information leakage including the final model aggregation revealed to the server, 
we first introduce a new notion of privacy called \textit{aggregated privacy} which allows certain \emph{aggregation} of user statistics to be revealed to the server (but not individual ones), enabling robust computations. We formulate the problem as a secure multiparty computation (MPC) problem, and construct {\bf \scheme}, an FL framework that achieves robust aggregations against a wide range of user-launched attacks, while satisfying aggregated privacy. As concrete instantiations of the proposed \scheme framework, we develop aggregated private protocols that are robust to two specific model poisoning and backdoor attacks; we provide detailed security analyses and experimental evaluations on these two protocols, demonstrating their performance and efficiency advantages over multiple baselines. 

At the core of \scheme are three key components: (1) Robust algorithms; (2) Secret sharing with Lagrange coded computing (LCC) \cite{yu2019lagrange}; and (3) Zero-knowledge proof (ZKP).
This framework is designed to incorporate robust algorithms that can compute a reliable final aggregation against potential malicious updates. Users' private data remains in perfect secrecy within a certain number of colluding users via LCC. 
In the presence of Byzantine users who may arbitrarily deviate from the protocol, \scheme further leverages SNIP \cite{corrigan2017prio}, a secret sharing-based ZKP, to ensure the consistency of local executions. Users generate proofs that are verifiable by the collaboration of all other users and the server. The server can orchestrate the users to perform the final model aggregation securely and robustly.

The contributions of this work are summarized as follows: 
\begin{itemize}
    \item We define a novel privacy notion of \textit{Aggregated Privacy} for secure aggregation, particularly in FL settings. This definition prevents any user's individual data from leakage while allowing certain aggregated statistics revealed. 
    
    \item We propose \scheme, a secure FL framework for robust aggregation that satisfies aggregated privacy, against the collusion of semi-honest server and Byzantine users. It guarantees honest users to keep local data private while receiving a reliable global model resilient to various model integrity attacks.

    \item We significantly improve the communication efficiency of \scheme, utilizing the constant-size commitment scheme \cite{jahani2023byzantine, nazirkhanova2022information} and extending the ZKP scheme to LCC.
    
    
    \item \scheme is instantiated into two secure and robust protocols.
    We provide theoretical analyses for their privacy and efficiency, and perform extensive experiments to demonstrate the protocols' advantages in terms of robustness and computational overheads.
\end{itemize}

\section{Background and Motivation}\label{Section background}

\subsection{Vanilla Federated Learning}
In a vanilla federated learning protocol, the early concept involves conducting model training locally to ensure the privacy of users' datasets. \cite{mcmahan2017communication, zhang2021survey}. 
Training is concurrently performed across all the users through an iterative process in which the users interact with the center server to jointly update the global model. A standard FL pipeline operates as follows: at $t$-th iteration,  the server shares the current state of the global model $\mathbf{w}^{(t)}$ with all the users. Upon receiving the global model, each user $i\in[N]$ trains a local model $\mathbf{w}_i^{(t)}$ on its dataset, where the corresponding local update is $\mathbf{x}_i^{(t)} \triangleq \mathbf{w}_i^{(t)} - \mathbf{w}^{(t)}$. This is typically done by employing a minibatch stochastic gradient descent rule \cite{bottou2010large} to minimize the local objective function $\mathcal{L}_i(\mathbf{w})$. Each user then uploads the local update $\mathbf{x}_i^{(t)}$ to the server.
After receiving the model updates, the server aggregates them to update the global model as follows.
\begin{equation}\label{eq: FL aggregation}
\mathbf{w}^{(t+1)}=\mathbf{w}^{(t)} - \eta^{(t)}\cdot\sum\limits_{n\in [N]} \mathbf{x}_i^{(t)},
\end{equation}
where $\eta^{(t)}$ is the learning rate. The updated global model $\mathbf{w}^{(t+1)}$ is distributed to all the users to start the next iteration. 

\subsection{Privacy-preserving Model Aggregation}\label{sec: Privacy-preserving Model Aggregation}
Contradicting the early concept of FL's privacy, studies have revealed that conventional federated learning poses risks of leaking users' data. Works such as \cite{zhu2019deep,yin2021see,wei2021gradient, lyu2020survey, so2023securing} have shown that by launching gradient inversion attacks, it is possible to reveal information about the local training data from individual users' local updates.
To address this issue, \textit{secure aggregation} is introduced to protect individual information during aggregation process and only reveal the final aggregated model ~\cite{bonawitz2017practical, bell2020secure, truex2020ldp, wei2020federated, naseri2020local, el2022differential, so2022lightsecagg, jahani2023swiftagg}. Aiming at preserving privacy, the commonly assumed threat model is honest-but-curious (semi-honest), where all parties follow the protocol but try as much as possible to infer user's individual information. Moreover, to mitigate the risk of the server manipulating the aggregation results, several subsequent secure aggregation approaches have evolved into verifiable secure aggregation methods to provide model integrity \cite{wang2023towards , buyukates2024lightverifl}.


\textit{Secret sharing} (SS) is a typical way to achieve secure aggregation \cite{brea,jahani2023byzantine,chowdhury2021eiffel,lycklama2023rofl, ma2024trusted}. It enables a user to split its local update (secret) into multiple secret shares, where the secret remains theoretically private as long as no entity holds more shares than a specified threshold. Shareholders (other users) can collectively aggregate these secret shares to compute the aggregated secret. The server collects the aggregation of shares and reconstructs the actual aggregation of local updates. 
 \textit{Differential privacy} (DP) is also popularly used in secure aggregation protocols ~\cite{bonawitz2017practical, wei2020federated, naseri2020local, el2022differential, le2023privacy}. This technique is to add controlled noise or randomness to updates before aggregation. This ensures that the aggregated results do not disclose sensitive information about local training data.
\textit{Homomorphic encryption} (HE) is another powerful cryptographic technique that allows computations to be performed on encrypted data without decrypting it, therefore, exploited in secure aggregation protocols ~\cite{fang2021privacy, chen2021homomorphic, ma2022privacy, wang2023towards}.
The server can exploit such a feature to perform a secure aggregation and decrypt the real aggregation without privacy leakage.
\textit{Zero-knowledge proof} as a common MPC technique, is used by a few verifiable aggregation protocols to provide verifiability to the model aggregations performed on the server side\cite{wang2024zkfl, wang2023towards , buyukates2024lightverifl, ma2024trusted}.


\subsection{Robust Model Aggregation}\label{sec: byzantine-resilient fl}

Aside from concerns for privacy, the issue of model robustness has been pervasively discussed in FL. 
It refers to attacks launched by malicious users who aim to sabotage the learning process from convergence or implement a backdoor to the global model, which impair the model robustness overall. 
Generally, there are two types of attacks of interest. Malicious users can launch \textit{untargeted attack}, during which they upload arbitrary false models to impede the server from completing the model aggregation correctly \cite{sun2019can, wang2020attack, lyu2020survey}. There are also omniscient untargeted attacks where the adversary is aware of other benign updates, so it can design the update to launch stronger attacks \cite{shejwalkar2021manipulating} or evade the gradient sifting algorithms \cite{cao2022mpaf}. 
The other type is \textit{targeted attack} or \textit{backdoor attack}, where the malicious users try to inject a \textit{backdoor} to the global model  without affecting the main learning task ~\cite{gu2017badnets, chen2017targeted, barni2019new, nguyen2020input, li2021invisible, zeng2021rethinking, li2022backdoor}. 

To detect such fraudulent user updates, robust FL aggregation has been introduced by an abundance of studies. At the onset of the introduction of model poisoning, a few noise-filtering-based algorithms are put forward to select honest ones across users~\cite{ghosh2020communication, karimireddy2020byzantine, velicheti2021secure, orekondy2019prediction, normthreshold}. The statistical traits of local updates are exploited to capture the malformed ones and compute a final aggregation without them. However, new attacks manipulate the updates to bypass the defense so that get selected for the final aggregation \cite{fang2020local, shejwalkar2021manipulating}. Competitively, Some recent robust aggregation protocols against poisoning attacks have been proposed to address this issue \cite{krum, guerraoui2018hidden, chen2021pois, zhao2020pdgan}.

Backdoor attacks are more stealthy and harder to detect as it requires similar update statistics to preserve the main task performance. Various backdoor defense methods have been proposed to counter deliberate manipulations on local updates, as evidenced by studies such as \cite{bagdasaryan2020backdoor, fung2018mitigating, li2021neural, zeng2021rethinking, li2021anti, ozdayi2021RLRdefending, chen2022effective, qi2023towards, yin2018byzantine, bernstein2018signsgd, ozdayi2021RLRdefending, xu2021signguard}.
In \cite{yin2018byzantine}, the coordinate-wise median and trimmed mean techniques are employed to achieve promising performance in robust aggregation. Several studies notice the impact of the sign differences of honest and malicious gradients \cite{bernstein2018signsgd, ozdayi2021RLRdefending, xu2021signguard}. Particularly, \cite{bernstein2018signsgd} is the first to utilize majority voting on gradient signs during final aggregation to enhance robustness and convergence.  Subsequent works investigate how to use the coordinate-wise sign information to guide the final aggregation; for instance, \cite{ozdayi2021RLRdefending} uses the sign to adjust the learning rate in each round to defend against backdoor attacks, while SignGuard ~\cite{xu2021signguard} clusters the signs to filter malicious gradients.

\subsection{Robust and Privacy-preserving Model Aggregation}\label{Byzantine-Resilient Secure Federated Learning}
Simultaneously addressing both aspects in \ref{sec: Privacy-preserving Model Aggregation} and \ref{sec: byzantine-resilient fl} is crucial in real-world applications; however, it presents significant challenges. Several studies have endeavored to tackle this issue~\cite{brea, naseri2020local, jahani2023byzantine, chowdhury2021eiffel, rathee2023elsa, lycklama2023rofl, ma2022shieldfl, li2024efficiently, le2023privacy, ma2024trusted, yazdinejad2024robust}. 
BREA \cite{brea} is the first privacy-preserving and robust FL protocol known, who uses secret sharing scheme to hide local updates, and multi-Krum algorithm to remove malicious updates ~\cite{krum}. \cite{jahani2023byzantine} significantly improves on the computational and communication efficiency of BREA. \cite{naseri2020local} investigates how local and central DP alleviates various attacks and offers a quantifiable framework for the trade-off between privacy and robustness. 
A privacy-preserving protocol, EIFFeL ~\cite{chowdhury2021eiffel}, provides a secret-sharing based secure aggregation solution while provide robustness by well-formedness verification on user updates. It does not require users to upload the local updates for integrity validation, instead, using a ZKP scheme for correctly performing norm integrity check locally. 
RoFL \cite{lycklama2023rofl} further extends the norm bounding validation in EIFFeL and demonstrates different norm constraints also mitigate the effect of backdoor attacks. It boosts the efficiency of the protocol in addition.  
ByITFL~\cite{xia2024byzantine} proposes a theoretically secure protocol that provides robustness by update norm validation as well. It is built upon the Byzantine-resilient protocol FLTrust \cite{cao2020fltrust} to further provide privacy by LCC, especially in scenarios involving colluding users alongside a curious server. 
Trust-ZKFL~\cite{ma2024trusted} explores a protocol based on ZKFL~\cite{wang2024zkfl} that provides user privacy, aggregation robustness and model integrity under a non-colluding threat model. It proposes a rank-based scheme and a similarity-based scheme to compute inter-user information to robustly guide the final model aggregation. \cite{yazdinejad2024robust} proposes a privacy-preserving and robust protocol by introducing a group of trusted autonomous auditing entities to perform filtering for malicious encrypted updates. It utilizes additive HE and a consensus scheme to relay the encrypted messages processed by the auditors.  Other protocols offer privacy and robustness guarantees by exploiting multiple-server architectures, such as \cite{ma2022shieldfl, rathee2023elsa, li2024efficiently, le2023privacy}.



\subsection{Challenges and Motivation}
\label{sec: challeges}

Current robust and secure protocols 
face limitations in either (a) privacy-preserving guarantee or (b) robust aggregation performance. 
(a) On one hand, the primary issue in ~\cite{brea, jahani2023byzantine, bozdemir2021privacy} is the privacy leakage of individual user information through intermediate computation results in plaintext. In BREA\cite{brea} and ByzSecAgg\cite{jahani2023byzantine}, the multi-Krum algorithm~\cite{krum} is used, which reveals the pairwise distances between the local updates. Such pairwise distances potentially provide information to a curious server to reconstruct users' local updates. In Trust-ZKFL~\cite{ma2024trusted}, the coordinate-wise distances between user updates are also leaked to the attacker under collusion. In rank-based scheme or similarity-based scheme, for two users in comparison, when one of the users colludes with the server or one user from the reconstruction subset, the corrupted party is able to retrieve another user's local model by simple computations. 
(b) On the other hand, works that provide provable privacy under user-server collusion are unable to defend against malicious user attacks. Specifically, EIFFeL \cite{chowdhury2021eiffel}, RoFL \cite{lycklama2023rofl} and \cite{yazdinejad2024robust}  perform validation functions on user updates individually; hence, their defense cannot leverage the overall statistical information across users. For example, malicious users can easily bypass detection by keeping their updates below the norm threshold while still implanting a backdoor. 


For general robust aggregations, cryptography techniques mentioned in Section \ref{sec: Privacy-preserving Model Aggregation} have limitations in different aspects.
Though HE provably addresses privacy concerns, a critical limitation is its impractical complexity \cite{fang2021privacy, chen2021homomorphic, ma2022privacy, ma2022shieldfl, xia2024byzantine, yazdinejad2024robust}. In practice, the computational cost is $10^3$ to $10^4$ times greater than that of the non-encrypted vanilla FL, and the models under consideration are restricted to simple logistic regression models in \cite{ma2022shieldfl, yazdinejad2024robust}. Hence, HE protocols suffer from extremely high complexity, rendering them highly impractical when sophisticated models are involved.
Compared with HE, DP alleviates the high complexity and provides comparable efficiency with vanilla FL, while give provably secure solutions \cite{bonawitz2017practical, wei2020federated, naseri2020local, el2022differential}. However, the major issue with DP is the random masking that hinders the detection of malicious updates. A malicious update potentially becomes indistinguishable to many robust algorithms by masquerading as a benign one using a masked update. How to evaluate and provide robustness against deliberately manipulated updates in DP still remains an open problem \cite{lyu2022privacy, el2022differential}.

To overcome these limitations, we are motivated to introduce a framework for general robust algorithms while minimizing information leakage with provable security. Specifically, we consider the strongest threat model used in single-server FL protocols, where Byzantine users can collude with the server to compromise the privacy of the victim users. Under this threat model, for a general class of robust algorithms against various attacks, we envision that the best scenario of privacy guarantee is the same as the secure aggregation, where only the final model aggregation is revealed to the server. 
To this end, we ask: \textit{For the colluding threat model, what is the minimum amount of information that can be revealed, to accomplish a robust and secure aggregation protocol?} 
To answer this question, we first characterize the minimum privacy leakage through formalizing a novel privacy notion called \textit{aggregated privacy}, which allows nothing but the aggregation of certain statistics of users' updates (including the aggregated update) to be revealed. Then, we develop \scheme, a general privacy-preserving framework for robust aggregation that satisfies aggregated privacy.
\section{Problem Formulation}\label{section:formulation}
Consider a federated learning (FL) network, with $N$ users who can directly communicate with the server. Users communicate with each other via relaying of the server, such as by the TLS protocol \cite{dierks1999tls} so that the relayed messages are kept confidential from the server \cite{li2020review,zhang2021survey}. As shown in \eqref{eq: FL aggregation}, iterating the global model can be viewed as computing a vector of dimension $d$, $\mathbf{w}\in\mathbb{R}^d$, which is flattened from the global model. A user trains a local model from the global model, and uploads local model difference $\mathbf{x}_i \in \mathbb{R}^d$, or local update, to the server. The server collects and aggregates the received updates to compute the global model for the next iteration.  

\noindent \textbf{Threat Model}. The adversary simultaneously corrupts a set of up to $T$ users and the server, and gains full access to their local states. We consider 
Byzantine users who may arbitrarily deviate from the protocol for multiple purposes; 
and also a semi-honest server who faithfully follows the protocol, but tries to infer victim users' private datasets from its received messages.
We assume that the set of malicious users is static in a single global iteration. 

We note that the honest-but-curious (semi-honest) server setting is pervasively considered in previous works on secure and robust federated learning~\cite{brea, jahani2023byzantine, bonawitz2017practical,so2022lightsecagg, liu2021privacy, ma2022shieldfl, le2023privacy, li2024efficiently, ma2024trusted, yazdinejad2024robust}. 
In comparison with works like ~\cite{shao2022dres,liu2021privacy,ma2022shieldfl,li2024efficiently, ma2024trusted, le2023privacy, yazdinejad2024robust} who consider the coexistence of semi-honest server(s) and malicious users but acting separately, we also consider the possibility of server-user collusion.  That is, an adversary can manipulate the messages sent from the corrupted users, such that the aggregated model update, observed at the curious server, reveals more information about the other victim users' private data.   
Also, for the single-server setting considered in this work, dealing with a Byzantine server who can arbitrarily deviate from the protocol is extremely challenging. Even for the protocols in \cite{chowdhury2021eiffel} and \cite{lycklama2023rofl} that claim to be secure against a malicious server actively inferring users' private data, they are vulnerable to recently developed active attacks for data leakage (see, e.g.,~\cite{pasquini2022eluding}). Hence, these protocols are essentially secure against a semi-honest server.
For a set of user updates $\{\mathbf{x}_i | i\in [N]\}$,  we denote a \textit{robust algorithm} by $\Omega$ and its output by $\mathbf{x}_{\Omega}$, i.e., $\mathbf{x}_{\Omega}\leftarrow \Omega(\{\mathbf{x}_i | i\in [N]\})$. For example, \cite{krum} optimizes $\mathbf{x}_{\Omega}$ by minimizing the distances between $\mathbf{x}_{\Omega}$ and $\mathbf{x}_i$s; and \cite{geomedian} computes $\mathbf{x}_{\Omega}$ by finding the geometric median of $\mathbf{x}_i$s. Generally for any robust model aggregation, it can be represented by defining a $\Omega$.
Then given an aggregation algorithm, the goal is to compute $\mathbf{x}_{\Omega}$ with the minimum information leakage to the adversary. 
For secure aggregation, where $\Omega$ represents update summation, protocols are designed to compute $\Sigma\mathbf{x}_i$ as the final result such that nothing else is exposed. However, 
when $\Omega$ is complicated for a robust aggregation (e.g., \cite{krum, liu2012privacy, bozdemir2021privacy, ozdayi2021RLRdefending}), additional information is often leaked to the curious server and users.  
To formalize the minimal possible amount of information leaked to the adversary, we generalize the privacy guarantee to a notion called \textit{aggregated privacy}, which requires that no additional information is exposed, except the summation of a given function of $\mathbf{x}_i$. 
Therefore, to compute $\mathbf{x}_{\Omega}$ with a sophisticated $\Omega$ in a privacy-preserving manner, we slightly relax the security requirement in secure aggregation to certain aggregated statistics of local updates, but not allowing any information leakage related to any specific user.  We formalize this notion in the following definition:

\begin{mydef}[Aggregated Privacy]
    \label{def: aggregated privacy}
    Given users' local updates $\{\mathbf{x}_i, \forall i \in [N]\}$, a robust aggregation algorithm $\Omega$, and the function $\phi_\Omega(\cdot)$ on an individual update, an aggregation protocol satisfies aggregated privacy if nothing other than the aggregation $\sum_{i\in [N]}\phi_\Omega(\mathbf{x}_i)$ is revealed to the adversary.
\end{mydef}

\noindent\textbf{Remark 1}: As a special instance, when $\Omega$ stands for a plain summation and $\phi_\Omega$ is the identity function, i.e., $\sum_{i\in [N]}\phi_\Omega(\mathbf{x}_i)=\sum_{i\in [N]}\mathbf{x}_i$, the aggregated privacy reduces to the conventional privacy constraint in secure aggregation. However, plain summation of individual updates does not provide any robustness to malicious updates.

\noindent \textbf{Remark 2}: Achieving aggregated privacy for robust aggregation algorithm $\Omega$ is highly non-trivial, and \textbf{none} of previous works have fulfilled such definition. BREA \cite{brea} and ByzSecAgg \cite{jahani2023byzantine} utilize the Krum algorithm, where the pairwise distances between $\mathbf{x}_i$ and $\mathbf{x}_j$ are revealed to the curious server, hence violating the aggregated privacy. Similarly, Trust-ZKFL \cite{ma2024trusted} reveals coordinate-wise distance or cosine similarity between user updates, to not only the server but also a subset of users. In EIFFeL \cite{chowdhury2021eiffel} and RoFL \cite{lycklama2023rofl}, a validation predicate, typically a norm bound, is applied to individual updates, revealing the validity of $\mathbf{x}_i$ to all parties. This approach violates aggregated privacy. To mitigate the privacy leakage of individual users, \cite{yazdinejad2024robust} introduces a trusted and autonomous auditing system but still reveals the validity of $\mathbf{x}_i$ to all.

To design provably secure robust protocols with aggregated privacy, we formulate a secure multi-party computation (MPC) problem with $N$ input updates $\{\mathbf{x}_i | i\in [N]\}$ from $N$ FL users, denoting the MPC problem as ${SRA}_{\Omega}$ (short for {secure and robust aggregation with $\Omega$}).
The adversary controls over a corrupted subset $U_M$ of  users and can access the local state of the server; it runs any next-message algorithm,  $\mathcal{A}$.

\begin{mydef}[${SRA}_{\Omega}$]
    \label{def: SRA}
    Given a robust aggregation algorithm $\Omega$ and inputs $\{\mathbf{x}_i | i\in [N]\}$, a protocol $\Pi_{\Omega}$ securely executes $\Omega$ satisfying aggregated privacy
    if it satisfies:
    \begin{itemize}
        \item \textbf{Integrity}. Consider the output of the protocol execution 
        $\mathbf{x}_{agg} = \Pi_{\Omega} (\{\mathbf{x}_i | i\in [N]\})$, and the plaintext execution $\mathbf{x}_{\Omega} = \Omega(\{\mathbf{x}_i | i\in [N]\})$. There always exists a proper security parameter $\kappa$, such that the following holds: 
        \begin{equation}
            Pr[|\mathbf{x}_{agg}-\mathbf{x}_{\Omega}|_2 < \tau] \geq 1-negl(\kappa),
        \end{equation}
        for any small constant $\tau>0$.
        \item \textbf{Privacy}. 
        For any adversary running $\mathcal{A}$ who controls a malicious subset $U_M$ of users and a semi-honest server, 
        and $U_H = [N]\backslash U_M$ denotes the rest of honest users,
        there exists a probabilistic polynomial-time (P.P.T.) simulator $\mathcal{S}$, such that
        \begin{multline}\label{eq: security of collusion}
            Real_{\Pi_{\Omega}}( \mathcal{A}, \{ \mathbf{x}_i | i\in U_H \}, \{ \mathbf{x}_i | i\in U_M \})
                     \equiv_{C} \\
                      \mathcal{S}({\Omega}, \{ \mathbf{x}_i | i\in U_M \}, \sum_{i\in [N]}\phi_{\Omega} (\mathbf{x}_i) ),
        \end{multline}
        where $Real_{\Pi_{\Omega}}$ represents the joint view of $U_M\cup \{server\}$ executing $\Pi_{\Omega}$; $\phi_{\Omega}$ is a pre-defined function applied to updates; ``$\equiv_{C}$'' denotes computational indistinguishability.
  \end{itemize} 
\end{mydef}

\noindent {\bf The Proposed Solution.} We propose a secure computing framework \textit{\scheme} to construct ${SRA}_{\Omega}$ protocols. The key idea of \scheme is to choose an appropriate function $\phi$, such that the robust aggregation can be accomplished via computing aggregation of $\phi({\bf x}_i)$s. To ensure privacy and integrity for the aggregation operation against malicious adversaries,
\scheme mainly relies on the following techniques: 
\begin{enumerate}
    \item \emph{Lagrange coded computing}~\cite{yu2019lagrange} enables efficient message encoding with perfect secrecy, while supporting linear computations on the ciphertext.  We exploit this technique to create secret shares of the updates and their functions, as well as the aggregation of the updates and functions, such that they are kept private during the aggregation process. 
    \item \emph{Secret-shared non-interactive proof}~\cite{corrigan2017prio} provides a distributed ZKP for the authenticity of the users' operations. We use this technique to guarantee that the users have honestly performed the function evaluations on their shares. 
\end{enumerate}

Next, we first present cryptographic primitives employed by \scheme in Section~\ref{sec: building block}, and then provide the full description of \scheme and its instantiations for two particular $\Omega$s in Section~\ref{sec: framework}.

\section{Building Blocks of \scheme}\label{sec: building block}

\subsection{Key Agreement (\texttt{KA}) and Authenticated Encryption (\texttt{AE})} 
We use the standard Key agreement (\texttt{KA}) and Authenticated encryption (\texttt{AE}) for user communication's privacy and integrity. \texttt{KA} and \texttt{AE} involve three and two algorithms respectively.
\begin{itemize}
    \item $pp \leftarrow$ \texttt{KA.param($\kappa$)}. It generates public parameters $pp$ from security parameter $\kappa$. 
    \item $[pk_i,sk_i]\leftarrow$ \texttt{KA.gen($pp$)}. The key generation algorithm creates a public-secret key pair.
    \item $sk_{i,j}\leftarrow$ \texttt{KA.agree($pk_j,sk_i$)}. User $i$ combines two keys to generate a symmetric session key $sk_{i,j}$. 
    \item  $c\leftarrow$\texttt{AE.enc($k,m$)}. The encryption algorithm uses a message $m$ with key $k$ and returns the ciphertext $c$. 
    \item $m\leftarrow$\texttt{AE.dec($k,c$)}. The decryption algorithm takes a ciphertext $c$ with key $k$ and returns the original text $m$, or error symbol $\perp$ when decryption fails. 
\end{itemize}


\subsection{Secret Sharing with Lagrange Encoding} \label{LCC_ss} The proposed protocol relies on a secret sharing scheme with encoding, which is known as Lagrange coded computation (LCC)~\cite{yu2019lagrange}. We extend it to the verifiable LCC with constant-size commitment in the next subsection. 

Initially, the server and all users agree on $K+T+N$ distinct elements $\{\alpha_{i},\beta_{k}:i\in[N],k\in[K+T]\}$ from $\mathbb{F}_q$ in advance.
User $i$ holds a vector $\mathbf{x}_i\in \mathbb{F}_q^d$, where $d$ is the dimension of the vector. Then, user $i$ partitions its message ${\mathbf{x}}_i$ into $K$ equal-size pieces, labeled by ${\mathbf{x}}_{i,1},{\mathbf{x}}_{i,2},\ldots,{\mathbf{x}}_{i,K}\in\mathbb{F}_q^{\frac{d}{K}}$. To provide $T$-colluding privacy protection, the user $i$ independently and uniformly generates $T$ random noises $\mathbf{z}_{i,K+1},\ldots,\mathbf{z}_{i,K+T}$ from $\mathbb{F}_q^{\frac{d}{K}}$. These random noises are used for masking the partitioned pieces, which are achieved by creating a polynomial $f_{i}:\mathbb{F}_{q}^{\frac{d}{K}}\rightarrow\mathbb{F}_q^{\frac{d}{K}}$ of degree $K+T-1$ such that\footnote{Unless otherwise specified, all operations are taken modulo $q$.}

\begin{IEEEeqnarray}{l}
f_i(\beta_{i})=\left\{
\begin{array}{@{}ll}
{\mathbf{x}}_{i,k},&\forall\, k\in[K]\\
\mathbf{z}_{i,k},&\forall\, k\in[K+1:K+T]
\end{array}\right.. \label{encdoing goal}
\end{IEEEeqnarray}
The Lagrange interpolation rules and the degree restriction guarantee the existence and uniqueness of the polynomial $f_i(\alpha)$, which has the form of

\begin{IEEEeqnarray}{c}\label{encoding polynomial}
\begin{split}
f_i(\alpha)&=\sum\limits_{k=1}^{K}{\mathbf{x}}_{i,k}\prod_{m\in[K+T]\backslash\{k\}}\frac{\alpha-\beta_{m}}{\beta_{k}-\beta_{m}}\\
&+\sum\limits_{k=K+1}^{K+T}\mathbf{z}_{i,k}\prod_{m\in[K+T]\backslash\{k\}}\frac{\alpha-\beta_{m}}{\beta_{k}-\beta_{m}}.
\end{split}
\end{IEEEeqnarray}

Next, user $i$ evaluates $f_{i}(\alpha)$ at point $\alpha=\alpha_{j}$ to obtain an encoded piece for user $j$, by
\begin{IEEEeqnarray}{c}\label{sharing piece}
[\mathbf{x}_i]_{j} = f_{i}(\alpha_{j}),\quad\forall\,i,j\in[N].
\end{IEEEeqnarray}
We use $[\cdot]$ to distinguish the original data and its encoded shares \footnote{When user indices appear on both sides of $[\cdots_i]_j$, it indicates the user $i$'s secret share sent to user $j$. We abuse $[\cdot]$ to present vectors without such indices.}. Taking $[\mathbf{x}_i]_{j}$ in \eqref{sharing piece} for an example, $\mathbf{x}_i$ is the user $i$'s secret, and $j$ stands for the share holder's index, therefore, $[\mathbf{x}_i]_{j}$ is what user $j$ holds for secret $\mathbf{x}_i$. Here, we allow $i=j$ for protocol simplicity. 
Finally, we wrap \texttt{LCC} as a primitive module that includes two algorithms:
\begin{itemize}
    \item $[ [\mathbf{x}_i]_{1},\ldots,[\mathbf{x}_i]_{N}] \leftarrow$ \texttt{LCC.share($\mathbf{x}_i$)}. Given the threshold $T$ and parameter $K$, where $T+K<N$, it generates shares for a secret vector $\mathbf{x}_i$. The algorithm partitions a vector into $K$ pieces,  uses \eqref{encoding polynomial} to encode them, and then generates shares using \eqref{sharing piece}.
    \item $ \mathbf{x}_i \leftarrow$ \texttt{LCC.recon($[[\mathbf{x}_i]_{1},\ldots,[\mathbf{x}_i]_{N}]$)}. The LCC generates a $[N,K+T,N-K-T+1]$ Reed-Solomon code. It can be decoded by the Gao's decoding algorithm~\cite{gao2003new}. 
\end{itemize}

\subsection{Verifiable LCC with Constant-Size Commitments}

We combine LCC and constant-size commitments ~\cite{jahani2023byzantine, nazirkhanova2022information} to leverage threshold privacy and encoding efficiency, which further reduces the communication overhead.

Let $g$ be a generator of the cyclic group $\mathbb{G}$ of order $\lambda$, where $\lambda$ should be some enough large prime such that $q$ divides $\lambda-1$.
To verify that the piece $\mathbf{x}_{i, k}$ is correctly constructed by using the Lagrange polynomial from \eqref{encoding polynomial}, user $i$ broadcasts the commitments $[{c}_{i,1},\ldots,{c}_{i,K+T}]$ to all other parties, by
\begin{IEEEeqnarray}{c}\label{commitments}
c_{i,k}= \begin{cases}\prod_{p=1}^{\left \lceil  \frac{d}{K} \right \rceil}\left(g^{\gamma^{p-1}}\right)^{\mathbf{x}_{i, k}[p]}, & \text { if } k \in[K], \\ \prod_{p=1}^{\left \lceil  \frac{d}{K} \right \rceil}\left(g^{\gamma^{p-1}}\right)^{\mathbf{z}_{i, k}[p]}, & \text { if } k \in[K+1: K+T],\end{cases}
\end{IEEEeqnarray}
where $\mathbf{x}_{i, k}[p]$ is the $p$-th entry of $\mathbf{x}_{i, k}$. A trusted third party generates $\gamma\in\mathbf{F}_q$ and computes $\mathbf{B}  \triangleq \left[g^{\gamma^0}, g^{\gamma^1}, \ldots, g^{\gamma^{\frac{d}{K}-1}}\right]$, where $\mathbf{B}$ is public to all users.

Upon receiving the share \eqref{sharing piece} and the public commitments \eqref{commitments} from user $i$, the user $j$  can verify the share by checking
\begin{equation}\label{verification}
\prod_{p=1}^{\left \lceil  \frac{d}{K} \right \rceil}\left(g^{\gamma^{p-1}}\right)^{[\mathbf{x}_i]_{j}[p]} \stackrel{?}{=} \prod_{k=1}^{K+T} {c}_{i, k} ^{\left(\prod_{m \in[K+T] \backslash\{k\}} \frac{\alpha_{j}-\beta_m}{\beta_k-\beta_m}\right)}.
\end{equation}
The equality ensures that share $[\mathbf{x}_i]_{j}$ is generated correctly by the polynomial in \eqref{encoding polynomial}. Moreover, assuming the intractability of computing the discrete logarithm for all users and the server, i.e., they cannot compute the discrete logarithm $\log_{g}({c}_{i,k})$ for any $k\in[K]$, therefore, there is no information leakage about the user's message. Following \eqref{commitments} and \eqref{verification}, we define two more algorithms for verifiability of \texttt{LCC}: 
\begin{itemize}
    \item $ [c_{i,1},\ldots,c_{i,K+T}] \leftarrow$ \texttt{LCC.commit($\mathbf{x}_i$)}. Given $\mathbf{x}_i$ and LCC encoding as described in \eqref{encoding polynomial}, the algorithm generates a constant-size vector commitment with dimension $K+T$, denoted as $\mathbf{c}_{\mathbf{x}_i}=[c_{i,1},\ldots,c_{i,K+T}]$ as in \eqref{commitments}. 
    \item $\{\perp,1\} \leftarrow$ \texttt{LCC.verify($[\mathbf{x}_{j}]_i,\mathbf{c}_{\mathbf{x}_j}$)}. This algorithm allows user $i$ to verify whether the received share is consistent with the commitments opened by user $j$, as defined in \eqref{verification}. The function returns a Boolean value of $1$ to indicate success in verification, or the error symbol $\perp$ if verification fails.
\end{itemize}



\subsection{Secret-shared Non-interactive Proof (SNIP)} \label{sec: SNIP}

We introduce the \textit{secret-shared non-interactive proof} (SNIP) as a primitive \cite{corrigan2017prio}. In the FL setting, SNIP can be used to prove that an arithmetic circuit is evaluated on certain local data that is secret shared. SNIP is originally proposed in additive secret sharing, EIFFeL extends it to the Shamir secret sharing \cite{chowdhury2021eiffel} and we further extend it to LCC for \scheme. 

\texttt{SNIP} follows the \texttt{LCC} module's setting. Given an arithmetic circuit agreed by all parties, denoted by $f(\cdot)$, the goal is for a user (prover) to prove it has faithfully performed $f(\mathbf{x}_i)=\mathbf{e}_i$ without revealing its inputs $\mathbf{x}_i$ and outputs $\mathbf{e}_i$,
Without loss of generality, we take a user to act as the prover while all the other users act as verifiers. 
Assume that the arithmetic circuit of $f$ has $P$ multiplication gates, where each gate has two input wires and one output wire. The topological order of the multiplication gates is given by $[1,2,\ldots,P]$ and agreed upon by all users. The \texttt{SNIP} works in the following steps.
\begin{enumerate}
    \item A prover (user $i$) evaluates function $f$, $f(\mathbf{x}_i) = \mathbf{e}_i$.  For the circuit multiplication gates, denote two input wires for gates $1$ to $P$ as $\{u_{1}^i,\ldots,u_{P}^i\}$ and $\{v_{1}^i,\ldots,v_{P}^i\}$. Use each side of input wires to interpolate polynomials, resulting in two polynomials of degree-$(P-1)$, $f_{u}^i(t)$ and $f_{v}^i(t)$, such that  $f_{u}^i(p)=u_{p}^i,f_{v}^i(p)=v_{p}^i, \forall p\in[P]$. The product of two polynomials $f_{h}^i(t) \triangleq f_{u}^i(t)\cdot f_{v}^i(t)$ is exactly the output wire of the $p$-th gate at the evaluation $t=p$. Denote the coefficients of polynomial $f_{h}^i(t)$ by $\mathbf{h}_i\triangleq[h_{1}^i,\ldots,h_{2P-2}^i]$. The prover then generates secret shares for $\mathbf{h}_i$ by \texttt{LCC.share($\mathbf{h}_i$)} with $K=1$, which is equivalent to Shamir's secret sharing, where shares are denoted by $[\mathbf{h}_i]_j, \forall j\in[N]$.
    
    \item A verifier (user $j$) recovers the computation of the function by the shares and generates a share for the Schwartz-Zippel polynomial test ~\cite{schwartz1980fast, zippel1979probabilistic}. Specifically, with $[\mathbf{h}_{i}]_j $, the verifier interpolates a polynomial up to the degree of $2P-2$, $[f_{h}^{i}]_j(t)$, which corresponds to all output wires. Thus, the share of the circuit output, $[\mathbf{e}_i]_j$, can be evaluated on the polynomial. Next, with $[\mathbf{x}_{i}]_j$ and $[f_{h}^{i}]_j(p), p\in[P]$, it can recover all intermediate results for input wires of all gates as additions and scalar multiplications can be directly applied. Since they are all computed from the share of polynomials, we denote the input wire values by $[u_{p}^{i}]_j, [v_{p}^{i}]_j$. SNIP interpolates two polynomials for two ways of input wires from $1$ to $P$, $[f_{u}^{i}]_j(t)$, $[f_{v}^{i}]_j(t)$.  Beaver's protocol \cite{beaver1992efficient}  is used to compute $[f_{u}^{i}(t)*f_{v}^{i}(t)]_j$ from $[f_{u}^{i}]_j(t)$ and $[f_{v}^{i}]_j(t)$. All verifiers agree on one or multiple random $r$ for the Schwartz-Zippel polynomial test. Recall that the output wire's polynomial is given by $[f_{h}^{i}]_j(t)$.
    Given an $r$, a share of polynomial test is generated by computing $[\sigma_{i}]_j = [f_{u}^{i}(r)*f_{v}^{i}(r)]_j - [f_h^{i}(r)]_j$.
    
    \item The server collects verifiers' $[\sigma_{i}]_j$ to reconstruct the result of the polynomial identity test. For a prover (user $i$), the server decodes $\sigma_i$ from $\{ [\sigma_i]_{j}|{j}\in[N]$ by \texttt{LCC.recon}. If the user honestly performs the circuit, the reconstruction is $0$. Otherwise, user $i$ fails in the proof. 
    
\end{enumerate}

We summarize the described scheme as a module \texttt{SNIP} with two functions for prover and verifier correspondingly. 
\begin{itemize}
    \item $\mathbf{h}_i\leftarrow$ \texttt{SNIP.prove}$\left(f(\mathbf{x}_i) = \mathbf{e}_i\right)$. $f$ is the given arithmetic circuit a prover (user $i$) wants to prove with input $\mathbf{x}_i$ and output $\mathbf{e}_i$. \footnote{The Beaver's triplet is generated and broadcast by each user insetup phase. }
    Algorithm output $\mathbf{h}_i$ is the coefficients given at the end of step 1.

    \item $[\sigma_j]_i, [\mathbf{e}_j]_i \leftarrow$ \texttt{SNIP.recon}($f,[\mathbf{x}_{j}]_i,[\mathbf{h}_{j}]_i$). This algorithm constructs the share of the polynomial test and the share of the circuit output as in step 2. 
    
\end{itemize}

\section{Detailed Description of \scheme}\label{sec: framework}

We first describe the general \scheme framework and then provide two concrete instantiations for robust aggregation against backdoor attacks and model poisoning attacks, respectively. 
For a given robust algorithm $\Omega$, to compute aggregation $\mathbf{x}_\Omega$ in a privacy-preserving and fault-tolerant manner, \scheme exploits LCC to secret share individual updates across users and chooses an appropriate $\phi_\Omega$ such that $\mathbf{x}_\Omega$ can be computed from $\sum_{i\in[N]}\phi_\Omega(\mathbf{x}_i)$. To guarantee the correctness in the presence of Byzantine users, \scheme utilizes SNIP to verify the computations of the arithmetic circuit $f_\phi$ induced from function $\phi$. 
\scheme proceeds in three main rounds: in the first round, the users perform local training to acquire updates, then they secret share updates and SNIP proofs; in the second round, the users perform local computations on the acquired shares and send results to the server; in the last round, the server processes the received results and compute the final robust aggregation. 
A full pipeline of \scheme is presented in Fig.\ref{fig:scheme}


\subsection{General \scheme Framework}
\scheme starts with a \textbf{setup phase} to prepare all parties with some public parameters, which includes the security parameter $\kappa$ for key initializations, the finite field $\mathbb{F}_q$ with sufficiently large $q$ and amplifier $p\in\mathbb{F}_q$ for update computations, threshold $T$ for the maximum number of malicious users, partitioning parameter $K$ for LCC, vector $\mathbf{B}$ for constant-size commitment scheme, a set of points $\{\alpha_{i},\beta_k\in\mathbb{F}_q | i\in[N],k\in[K+T]\}$ for LCC evaluations.  

\noindent \textbf{Round 1 (Users Training and Secret Sharing)}.  After acquiring the update from local model training, user $i$ first maps the float local update vector $\mathbf{x}_i^{float}$ to a finite-field vector $\mathbf{x}_i \in \mathbb{F}_q^d$ following a standard \texttt{quantize} technique as in ~\cite{brea, li2023fedvs} with amplifier $p$.
User $i$ uses \texttt{LCC.share} to slice and encode each $\mathbf{x}_i$ into $N$ shares $\left\{\left[\mathbf{x}_i\right]_1,\left[\mathbf{x}_i\right]_2, \cdots\left[\mathbf{x}_i\right]_N\right\}$. To ensure the authenticity of the shared information, user $i$ commits to its shares through the constant-size commitment scheme and generates commitment $\mathbf{c}_{x,i}$.  
Next, a user needs to prove that it has faithfully evaluated an arithmetic circuit $f_{\phi}(\mathbf{x}_i)=\mathbf{e}_i$, where $f_{\phi}$ is defined based on $\phi_\Omega$. 
Using \texttt{SNIP.prove}, the user $i$ generates its proof $\mathbf{h}_i$ for the equality $f_{\phi}(\mathbf{x}_i)=\mathbf{e}_i$, then encodes $\mathbf{h}_i$ to $\{[\mathbf{h}_i]_j,\forall j\in[N]\}$ and commits to it by $\mathbf{c}_{{h},i}$. 
At the end of the round, each user $i$ sends \texttt{AE.enc}($[\mathbf{x}_i]_j, [\mathbf{h}_i]_j$) to user $j,\forall j\in[N]$ encrypted by $(i,j)$ secret session key, and broadcasts commitments $\mathbf{c}_{x,i}, \mathbf{c}_{{h},i}$.


\noindent\textbf{Round 2 (Users Local Computation)}. Using the broadcast commitments, user $i$ can verify the received shares from any user $j$. If user $j$'s message does not pass the verification, user $i$ adds user $j$ to the local malicious user set $U_A^{(i)}$. Next, user $i$ uses \texttt{SNIP.recon} to construct a share of the polynomial identity check for user $j$, denoted by $[\sigma_j]_i$.
User $i$ combines the update shares over all users $\sum_{j\in[N]}[\mathbf{x}_j]_i$ as the aggregation of the shares. Then, the user $i$ submits the following four messages to the server: $m_1$: the local malicious user set $U_A^{(i)}$;
$m_2$: the shares of polynomial identity check $\{ [\sigma_j]_i, \forall j \in [N] \}$;
$m_3$: summation of $f_\phi$ output's shares $\sum_{j\in[N]} [\mathbf{e}_j]_i$; and $m_4$: the aggregation of updates' shares $\sum_{j\in[N]} [\mathbf{x}_j]_i$. 


\noindent\textbf{Round 3 (Server's Computation and Aggregation)}. The server collects users' messages and performs consistency checks to construct a global malicious user set $U_A$. Specifically, for message $m_1$, the server looks for user $i$ who satisfies any of two cases to add $i$ into $U_A$. The first case is when a user is tagged more than $T$ times, then it must have been tagged by at least one honest user. The second case is when a user tags more than $T$ other users, then it must have tagged at least one honest user as malicious. 
For message $m_2$, the server decodes each $\sigma_i, \forall i \in [N]$ and checks the reconstructed value. If $\sigma_i$ is zero, it means that the circuit is correctly executed by user $i$. If it is not, the circuit computations fail in at least one multiplication gate for user $i$, so $i$ will be added to $U_A$.  With the message $m_3$, the server reconstructs the aggregation of outputs $\sum_{j\in[N]} \mathbf{e}_j$. It is used as the guidance for robust aggregation of updates. How the aggregated information guides the final robust aggregation is expanded in the next section.
After the server computes $\mathbf{x}_\Omega$, it uses  \texttt{dequantize($\mathbf{x}_\Omega$)}\footnote{\texttt{dequantize} is the inverse function of \texttt{quantize} that maps values in $\mathbb{F}_q^d$ to $\mathbb{R}^d$.} to map the aggregation back to the real domain for the global model updating as in \eqref{eq: FL aggregation}. Before releasing the final aggregation, the server checks the global malicious user set $U_A$. It concludes the current iteration if $|U_A|=0$, or aborts the current iteration if $|U_A|> T$. However, if $0<|U_A|\le T$, the server will invoke an additional execution of round $2$ by sending its global malicious set $U_A$ to all users, so that users should recalculate each message with respect to set $[N]\backslash U_A$ instead of $[N]$. Then, the server normally proceeds round $3$ with the malicious users removed. 

\subsection{Two Instantiations}
Leveraging the proposed framework \scheme, we construct two concrete $SRA$ protocols for two specific robust aggregation algorithms. One is Robust Learning Rate (RLR) for backdoor attack, and the other is Robust Federated Aggregation (RFA) for model poisoning attack. 
For each robust aggregation algorithm, we first explain how the robust algorithms operate in plaintext, and then present how the robust algorithms apply to \scheme to achieve the $SRA$ protocols.
Additionally, a special feature of repeated pattern circuits is used in both instantiations to reduce SNIP's computational overhead. Detail is provided in Appendix \ref{app:snip}.

\noindent {\bf \scheme with RLR.} In RLR~\cite{ozdayi2021RLRdefending}, it is shown to be effective to use the sign information of updates to avoid the backdoor behaviors.
For each coordinate of final aggregation, the positive/negative sign is decided on the number of positive/negative signs of all user updates. The users' majority vote of signs generates a binary mask with values in  $\{\pm1\}$. The final aggregation's coordinate signs are determined by applying this binary mask to eliminate the malicious updates.  Specifically, for $\mathbf{x}_i$ and the $k$-th entry $\mathbf{x}_i[k]$, RLR assigns $+1$ to a binary vector $\mathbf{v}$ if  $\max\{\sum_{i\in[N]} sign(\mathbf{x}_i[k]), N-\sum_{i\in[N]} sign(\mathbf{x}_i[k]) \}\ge\theta$
, or assigns $-1$ otherwise, where the threshold is $\theta = T+1$ and $sign(\cdot)$ returns $1$ or $0$ for positive or negative inputs, respectively. The final aggregation result is $\mathbf{x}_{RLR} = \mathbf{v} \odot \sum_{i\in[N]}\mathbf{x}_i$, where $\odot$ is the element-wise multiplication. 

Applying \scheme with RLR, we let $\phi$ be: 
\begin{equation}\label{eq: phi rlr}
    \phi(\mathbf{x}_i) = \Big(sign(\mathbf{x}_i) 
    , \mathbf{x}_i \Big) .
\end{equation}
In this case, the evaluation circuit to prove is simply $f_\phi(\mathbf{x}_i)\triangleq sign(\mathbf{x}_i)=\mathbf{e}_i$.
As a result, the server reconstructs $\sum_{i\in[N]} \mathbf{e}_i$ and $\sum_{i\in[N]} \mathbf{x}_i$ in the last round, with no additional information revealed. Next, it recovers the binary mask $\mathbf{v}$ from $\sum_{i\in[N]} \mathbf{e}_i$ to correctly compute final robust aggregation as in RLR.

\begin{figure*}[]
    \centering
    \includegraphics[width=0.85\linewidth]{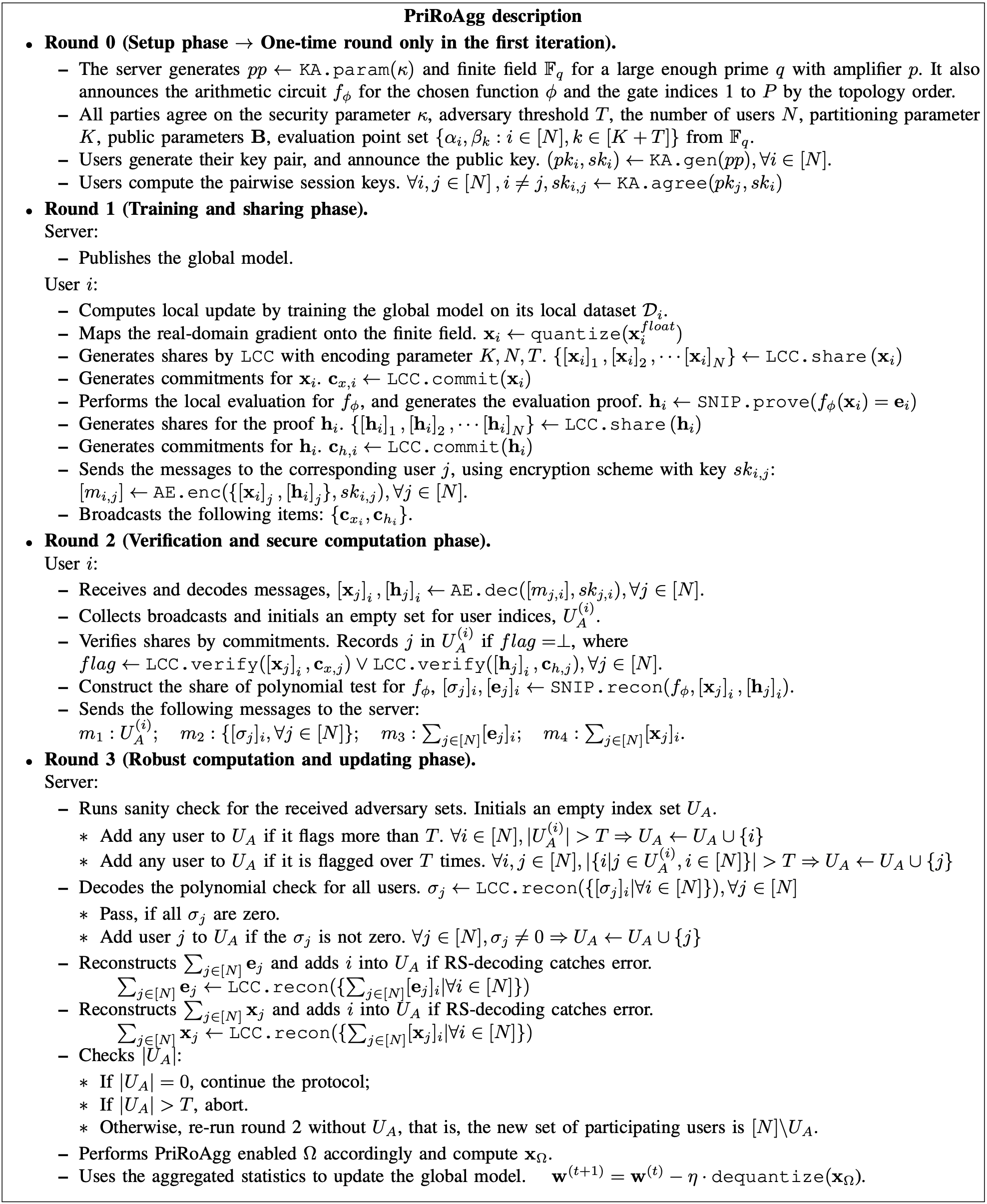}
    \caption{Description of framework \scheme}
    \vspace{-5mm}
    \label{fig:scheme}
\end{figure*}

\noindent {\bf \scheme with RFA.} RFA~\cite{geomedian} exploits the Weiszfeld algorithms \cite{weiszfeld2009point} to compute the geometric median for the user updates. Specifically, we utilize the one-step smoothed Weiszfeld algorithm (Algorithm~3 in \cite{geomedian}) to compute the approximation of the geometric median. RFA computes $\omega_i = 1/||\mathbf{x}_i||_2$ and uses $\omega_i$ as the weight for the final robust aggregation. More precisely, $\mathbf{x}_{RFA} = \frac{1}{\sum_{i\in[N]} \omega_i} \sum\limits_{i\in [N]}\omega_i \mathbf{x}_i$.

In RFA instantiation, we choose $\phi$ to be :
\begin{equation}\label{eq: phi rfa}
    \phi(\mathbf{x}_i) = \Big(1/||\mathbf{x}_i||_2 , \omega_i\mathbf{x}_i \Big) ,
\end{equation}
where $||\mathbf{x}_i||_2 \neq 0$. 
Since the arithmetic circuit does not allow division, the circuit to be verified is designed as $f_\phi({\mathbf{x}}_i)\triangleq(1 - \omega_i||\mathbf{x}_i||_2)^2=\mathbf{e}_i$, where $\mathbf{e}_i = 0$. 
In round $2$, user $i$ computes $[\omega_j\mathbf{x}_j]_i$ from $[\omega_j]_i$ and $[\mathbf{x}_j]_i$ by the Beaver's protocol~\cite{beaver1992efficient}. At the end of round 2, user $i$ substitutes $\sum_{j\in[N]} [\mathbf{x}_j]_i$ by concatenating $\sum_{j\in[N]} [\omega_j\mathbf{x}_j]_i$ and $\sum_{j\in[N]} [\omega_j]_i$ as message $m_4$. 
In the final round, the server reconstructs and checks if $\sum_{i\in[N]} \mathbf{e}_i=0$. Together with the fact check of polynomial identity $\sigma_i=0$, the server ensures the circuit $f_\phi(\mathbf{x}_i)$ is faithfully evaluated. For the final aggregation, the server reconstructs and decomposes $m_4$ to compute the weighted average $\frac{1}{\sum_{i\in[N]} \omega_i} \sum\limits_{i\in [N]}\omega_i \mathbf{x}_i$ as in RFA.





\section{Theoretical Analysis}\label{sec: theory}
\subsection{Security analysis}


In this subsection, we provide the security proofs for the privacy guarantees of two protocols in \scheme. 



\begin{mythm}(${SRA}_{RLR}$)
    \label{thm: SRA RLR}
    For a protocol $\Pi_{RLR}$ that securely implements a ${SRA}_{\Omega}$, assuming an adversary $\mathcal{A}$ who controls a subset $U_M$ of malicious users and gains full access to the local state of the server, and denoting $U_H = [N]\backslash U_M$ as the subset of honest users, there exists a probabilistic polynomial-time (P.P.T.) simulator $\mathcal{S}$, such that
    \begin{multline}\label{eq: RLR security}
            Real_{\Pi_{RLR}}( \mathcal{A}, \{ \mathbf{x}_i | i\in U_H \}, \{ \mathbf{x}_i | i\in U_M \})
                     \equiv_{C} \\
                      \mathcal{S}({\Omega_{RLR}}, \{ \mathbf{x}_i | i\in U_M \}, \sum_{i\in [N]}\phi_{RLR} (\mathbf{x}_i) ),
    \end{multline}
    where $Real_{\Pi_{RLR}}$ represents the joint view of $U_M\cup \{server\}$ executing $\Pi_{RLR}$; $\phi_{RLR}$ is defined as \eqref{eq: phi rlr}.
    
\end{mythm}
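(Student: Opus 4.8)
The statement is precisely the privacy clause of Definition~\ref{def: SRA} specialized to $\Omega_{RLR}$, so the plan is a standard simulation-based (real/ideal) argument: I would exhibit a P.P.T. simulator $\mathcal{S}$ that, given only the malicious inputs $\{\mathbf{x}_i \mid i\in U_M\}$ and the revealed aggregate $\sum_{i\in[N]}\phi_{RLR}(\mathbf{x}_i) = \big(\sum_{i}sign(\mathbf{x}_i),\ \sum_i \mathbf{x}_i\big)$, produces a transcript computationally indistinguishable from the real joint view of $U_M\cup\{server\}$. I would prove $\equiv_C$ through a sequence of hybrids, starting from $Real_{\Pi_{RLR}}$ and replacing, one honest-user contribution (or one cryptographic object) at a time, the real messages by simulated ones, bounding each step by the security of the corresponding building block: the $T$-privacy of \texttt{LCC} (Section~\ref{LCC_ss}), the hiding property of the constant-size commitments under the discrete-log assumption, the semantic security of \texttt{AE}, and the zero-knowledge property of \texttt{SNIP} (Section~\ref{sec: SNIP}).

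\textbf{Simulating the per-user (Round 1) messages.} The adversary, through its $|U_M|\le T$ corrupted users, receives from each honest user $j\in U_H$ only the shares $[\mathbf{x}_j]_m,[\mathbf{h}_j]_m$ at the evaluation points $\{\alpha_m \mid m\in U_M\}$, together with the broadcast commitments $\mathbf{c}_{x,j},\mathbf{c}_{h,j}$. Here I would have $\mathcal{S}$ feed each honest user a \emph{dummy} input (e.g.\ $\mathbf{0}$), run \texttt{LCC.share} and \texttt{SNIP.prove} on it, and hand the adversary the resulting shares and commitments. Since the $T$ noise terms $\mathbf{z}_{j,K+1},\dots,\mathbf{z}_{j,K+T}$ make any collection of at most $T$ evaluations of $f_j$ uniformly distributed and independent of the secret, the simulated shares are identically distributed to the real ones; the commitments differ only computationally, absorbed by a reduction to discrete log; and the messages between honest parties that the adversary merely relays are protected by \texttt{AE}. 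The \texttt{SNIP} shares $[\mathbf{h}_j]_m$ likewise leak nothing beyond the (true) statement $f_\phi(\mathbf{x}_j)=\mathbf{e}_j$, so $\mathcal{S}$ can generate them from the dummy witness.

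\textbf{Simulating the aggregate (Round 2) messages to the server.} This is where the revealed aggregate must be injected consistently. The server observes, from every user $i$, the summed shares $m_4=\sum_{j}[\mathbf{x}_j]_i$ and $m_3=\sum_{j}[\mathbf{e}_j]_i$, which reconstruct to the \emph{exact} quantities $\sum_j \mathbf{x}_j$ and $\sum_j \mathbf{e}_j=\sum_j sign(\mathbf{x}_j)$ that are part of the simulator's input, while $m_2=\{[\sigma_j]_i\}$ reconstructs to $0$ for honest $j$ and $m_1=U_A^{(i)}=\emptyset$. I would have $\mathcal{S}$ enforce this by the standard residual-input device: assign dummy input $\mathbf{0}$ to all honest users except a single designated $j^\ast\in U_H$, to which it assigns the virtual input $\sum_{i\in[N]}\mathbf{x}_i-\sum_{m\in U_M}\mathbf{x}_m$ (computable from its inputs), and analogously for the $sign$ component; the aggregate honest polynomial then interpolates to the correct values at all $N$ points, so the reconstructed $m_3,m_4$ match the target exactly, and $\mathcal{S}$ sets the $\sigma$-shares to share $0$ and all tag-sets to empty.

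\textbf{Main obstacle.} The delicate point is \emph{joint} consistency under collusion: the corrupted users see $\le T$ individual evaluations of each honest polynomial, while the colluding server sees the full aggregate polynomial (all $N$ summed evaluations, hence all of $\sum_{j\in U_H} f_j$). I must argue that combining these two views still reveals nothing about the individual honest inputs beyond the allowed aggregate, and that the virtual-input reassignment at $j^\ast$ is undetectable even though its polynomial now carries the entire honest sum. This reduces to showing that, conditioned on the aggregate polynomial and on any $T$ evaluations of each honest $f_j$, the joint distribution of the honest secrets is unchanged by the reassignment --- a dimension-counting argument on the degree-$(K+T-1)$ encodings with $T$ independent masking coefficients, which I expect to be the technical crux and the place where the bound $T+K<N$ is consumed. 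The remaining hybrids (commitments, \texttt{AE}) then contribute only negligible terms, giving the claimed $\equiv_C$ with advantage $negl(\kappa)$.
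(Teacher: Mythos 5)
Your overall strategy coincides with the paper's: a hybrid argument that replaces the Diffie--Hellman keys, then the honest users' round-1 secrets, then their round-2 messages, charging each step to the $T$-privacy of LCC, the hiding of the commitments, and the security of \texttt{AE}/\texttt{SNIP}. However, your concrete simulator has a gap in how it injects the revealed aggregate. For RLR the leaked statistic is the \emph{pair} $\phi_{RLR}(\mathbf{x}_i)=(sign(\mathbf{x}_i),\mathbf{x}_i)$, so the simulated honest inputs must simultaneously satisfy $\sum_{i\in U_H}\tilde{\mathbf{x}}_i=\sum_{i\in U_H}\mathbf{x}_i$ \emph{and} $\sum_{i\in U_H}sign(\tilde{\mathbf{x}}_i)=\sum_{i\in U_H}sign(\mathbf{x}_i)$, with the relation $\tilde{\mathbf{e}}_i=sign(\tilde{\mathbf{x}}_i)$ holding entrywise for every honest user, because SNIP forces each $\mathbf{e}_i$ to be the circuit output on $\mathbf{x}_i$ (the verifiers re-derive $[\mathbf{e}_i]_j$ from $[\mathbf{h}_i]_j$ and $[\mathbf{x}_i]_j$, and a proof for a false statement fails the Schwartz--Zippel test except with negligible probability). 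Your residual-input device --- all-zero dummies plus one designated $j^\ast$ carrying $\sum_{i\in U_H}\mathbf{x}_i$ --- matches the value aggregate but not the sign aggregate: $(|U_H|-1)\,sign(\mathbf{0})+sign\bigl(\sum_{i\in U_H}\mathbf{x}_i\bigr)$ is a $\{0,1\}$-valued vector plus a constant, whereas $\sum_{i\in U_H}sign(\mathbf{x}_i)$ ranges over $[0,|U_H|]$ per coordinate, so the server's reconstruction of $m_3$ distinguishes the two worlds. You cannot repair this by handing $j^\ast$ an independent sign-residual as its $\mathbf{e}_{j^\ast}$, nor by fabricating the honest round-2 messages directly, since the server's Reed--Solomon decoding combines the honest round-2 shares with the corrupted users' own round-2 shares, which the adversary computes from the round-1 shares it legitimately received; any inconsistency between the two is detectable.

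The paper's fix is to sample the entire honest tuple jointly: first draw $\tilde{\mathbf{e}}_i\in\{0,1\}^d$ subject to $\sum_{i\in U_H}\tilde{\mathbf{e}}_i=\sum_{i\in U_H}sign(\mathbf{x}_i)$, then draw $\tilde{\mathbf{x}}_i$ uniformly with signs constrained by $\tilde{\mathbf{e}}_i$ and subject to $\sum_{i\in U_H}\tilde{\mathbf{x}}_i=\sum_{i\in U_H}\mathbf{x}_i$, and run the honest protocol on these, so that round-1 shares, commitments, proofs, and round-2 aggregates are all globally consistent by construction. Your ``main obstacle'' paragraph correctly identifies joint consistency between the $\le T$ per-user shares and the full aggregate polynomial as the crux, and the dimension-counting argument you sketch is essentially the LCC property the paper invokes; what is missing is recognizing that the coupling $\mathbf{e}_i=sign(\mathbf{x}_i)$ rules out concentrating the residual in a single user and requires the constrained joint sampling instead.
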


\begin{proof}
The proof uses a hybrid argument to demonstrate the indistinguishability between the real execution and the simulator's output.
\begin{itemize}
    \item $Hyb_{0}$: This random variable represents the joint view of $U_M\cup \{server\}$ in the real world execution.
    
    \item $Hyb_1$: In this hybrid, for any honest user $i$, instead of using $sk_i$ for key generation, the simulator generates the uniformly random key in the Diffie-Hellman key exchange protocol \cite{diffie2022new} for the set-up phase. The Diffie-Hellman assumption guarantees that this hybrid is indistinguishable from the previous one. 

    \item $Hyb_2$: In this hybrid, the only difference from $Hyb_1$ is that the simulator $\mathcal{S}$ substitutes  $\mathbf{x}_i$, and $\mathbf{h}_i$ with generated $\Tilde{\mathbf{x}}_i$, and $\Tilde{\mathbf{h}}_i$ for any honest user $i$ in round 1. The simulator 
    $\mathcal{S}$ first randomly samples $\Tilde{\mathbf{e}}_i \in \{0,1\}^d$ such that
    \begin{equation*}
        \sum_{i\in U_H} \Tilde{\mathbf{e}}_i = \sum_{i\in U_H} sign(\mathbf{x}_i).
    \end{equation*}
    Then, the signs of the entries in $\Tilde{\mathbf{x}}_i$ are constrained. The simulator $\mathcal{S}$  samples $\Tilde{\mathbf{x}}_i\in \mathbb{F}_q^d $ by the uniform distribution such that
    \begin{equation*}
        \sum_{i\in U_H}  \Tilde{\mathbf{x}}_i = \sum_{i\in U_H}  \mathbf{x}_i.
    \end{equation*}

    The shares and commitments are generated accordingly to $\Tilde{\mathbf{x}}_i$. The $\Tilde{\mathbf{h}}_i$ is generated, shared and committed by protocol. Because $|U_M| \le T$, in the joint view, the adversary cannot know anything about the secret as long as the commitments are consistent with the shares. The indistinguishability is guaranteed by the property of LCC. 
    
    \item $Hyb_3$: In this hybrid, the only difference from $Hyb_2$ is that $\mathcal{S}$ substitutes the $U_H$'s messages sent to the server with the generated messages in round 2. The $U_A^{(i)}$ does not record any honest users and the rest of the users are processed by protocol. The messages $ [\sigma_j]_i, \sum_{j\in[N]} [\mathbf{e}_j]_i$, and $\sum_{j\in[N]} [\mathbf{x}_j]_i$ are computed by the protocol after substituting  $\mathbf{x}_i$,  and $\mathbf{e}_i$ with $\Tilde{\mathbf{x}}_i$, and $\Tilde{\mathbf{e}}_i$ respectively. As they are computed in the form of secret shares, the indistinguishability is guaranteed by the property of LCC.

\end{itemize}
\end{proof}


\begin{mythm}(${SRA}_{RFA}$)
    \label{thm: SRA RFA}
    For a protocol $\Pi_{RFA}$ that securely implements a ${SRA}_{\Omega}$, assuming an adversary $\mathcal{A}$ who controls a subset $U_M$ of malicious users and gains full access to the local state of the server, and denoting $U_H = [N]\backslash U_M$ as the subset of honest users, there exists a probabilistic polynomial-time (P.P.T.) simulator $\mathcal{S}$, such that
    \begin{multline}\label{eq: RFA security}
            Real_{\Pi_{RFA}}( \mathcal{A}, \{ \mathbf{x}_i | i\in U_H \}, \{ \mathbf{x}_i | i\in U_M \})
                     \equiv_{C} \\
                      \mathcal{S}({\Omega_{RFA}}, \{ \mathbf{x}_i | i\in U_M \}, \sum_{i\in [N]}\phi_{RFA} (\mathbf{x}_i) ),
    \end{multline}
    where $Real_{\Pi_{RFA}}$ represents the joint view of $U_M\cup \{server\}$ executing $\Pi_{RFA}$; $\phi_{RFA}$ is defined as \eqref{eq: phi rfa}.
\end{mythm}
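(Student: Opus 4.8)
The plan is to reuse the hybrid structure from the proof of Theorem~\ref{thm: SRA RLR}, since the two instantiations differ only in the choice of $\phi$, in the circuit being proved, and in how the message $m_4$ is assembled. I would again set $Hyb_0$ to be the real joint view of $U_M \cup \{server\}$, then pass through $Hyb_1$ in which every honest user's session keys are replaced by uniformly random keys (indistinguishable under the Diffie--Hellman assumption), then $Hyb_2$ in which the simulator $\mathcal{S}$ replaces each honest user's round-1 objects $(\mathbf{x}_i,\omega_i,\mathbf{h}_i)$ by simulated counterparts, and finally $Hyb_3$ in which the honest users' round-2 messages $m_1,\dots,m_4$ are recomputed from the simulated data. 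The first and last hybrids are essentially identical to the RLR case; the substantive content lives in $Hyb_2$ and in the new two-component structure of $m_4$.

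The substitution in $Hyb_2$ must respect what the server ultimately reconstructs, namely the pair $\sum_{i\in[N]}\phi_{RFA}(\mathbf{x}_i) = \big(\sum_{i}\omega_i,\ \sum_{i}\omega_i\mathbf{x}_i\big)$ from \eqref{eq: phi rfa}, together with the fact that every honest user's circuit output satisfies $\mathbf{e}_i = 0$. Subtracting the known malicious contributions, $\mathcal{S}$ obtains the honest aggregates $W_H = \sum_{i\in U_H}\omega_i$ and $V_H = \sum_{i\in U_H}\omega_i\mathbf{x}_i$. I would then have $\mathcal{S}$ sample simulated updates $\tilde{\mathbf{x}}_i$ and weights $\tilde{\omega}_i$ for $i\in U_H$ subject to $\tilde{\omega}_i\,\|\tilde{\mathbf{x}}_i\|_2 = 1$ (so that $\tilde{\mathbf{e}}_i = 0$ and the honestly generated SNIP proof $\tilde{\mathbf{h}}_i$ reconstructs to $\sigma_i = 0$), $\sum_{i\in U_H}\tilde{\omega}_i = W_H$, and $\sum_{i\in U_H}\tilde{\omega}_i\tilde{\mathbf{x}}_i = V_H$; all shares and commitments are then produced honestly from these values. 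Because $|U_M|\le T$, the at-most-$T$ shares and the constant-size commitments held by the adversary are distributed identically to the real ones and reveal nothing about the secrets, so indistinguishability of $Hyb_2$ from $Hyb_1$ follows from the privacy of LCC exactly as before.

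For $Hyb_3$ the extra ingredient, absent in RLR, is the Beaver-protocol computation of $[\omega_j\mathbf{x}_j]_i$ used to assemble $m_4$. I would argue separately that the opened masked values in Beaver's multiplication are one-time-padded by the triple's random shares and are therefore uniform, so $\mathcal{S}$ can sample them uniformly; the remaining messages $\{[\sigma_j]_i\}$, $\sum_{j}[\mathbf{e}_j]_i$, and the concatenated $m_4 = \big(\sum_j[\omega_j\mathbf{x}_j]_i,\ \sum_j[\omega_j]_i\big)$ are all secret shares recomputed from the simulated data, whose indistinguishability again reduces to the LCC property. By construction their reconstructions equal $\mathbf{0}$, and $\big(V_H + V_M,\ W_H + W_M\big)$, matching the real execution.

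The main obstacle I anticipate is the feasibility of the $Hyb_2$ sampling step, because the weight $\omega_i = 1/\|\mathbf{x}_i\|_2$ is functionally coupled to $\mathbf{x}_i$, unlike the independent sign pattern exploited in the RLR proof. Concretely, I must check that the nonlinear system $\{\tilde{\omega}_i\,\|\tilde{\mathbf{x}}_i\|_2 = 1\}_{i\in U_H}$ together with the $1+d$ aggregate constraints on $(W_H, V_H)$ admits solutions over $\mathbb{F}_q$ that $\mathcal{S}$ can sample efficiently (a degrees-of-freedom count gives $|U_H|\,d$ unknowns against $1+d$ constraints, so the system is generically underdetermined once $|U_H|\ge 2$), and that the degenerate case $\|\tilde{\mathbf{x}}_i\|_2 = 0$ excluded in \eqref{eq: phi rfa} is avoided. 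If a direct sampling proves awkward, the cleaner fallback is to invoke the zero-knowledge property of SNIP to simulate each $\tilde{\mathbf{h}}_i$ with reconstruction $\sigma_i = 0$ directly, thereby decoupling the proof simulation from the circuit constraint and leaving only the two linear aggregate constraints for $\mathcal{S}$ to satisfy.
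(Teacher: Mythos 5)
Your proposal follows essentially the same route as the paper: the identical hybrid sequence, with the substantive step being the $Hyb_2$ substitution of $(\mathbf{x}_i,\omega_i,\mathbf{h}_i)$ by simulated values constrained to satisfy $\tilde{\omega}_i\|\tilde{\mathbf{x}}_i\|_2=1$, $\sum_{i\in U_H}\tilde{\omega}_i=\sum_{i\in U_H}\omega_i$, and $\sum_{i\in U_H}\tilde{\omega}_i\tilde{\mathbf{x}}_i=\sum_{i\in U_H}\omega_i\mathbf{x}_i$, with indistinguishability reduced to the $T$-privacy of LCC. Your additional care about the Beaver-triple openings in $Hyb_3$ and the feasibility of the $Hyb_2$ sampling goes beyond the paper's terse treatment but does not change the argument.
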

\begin{proof}   
Here, we only present $Hyb_{2}$ since the other hybrids are identical to the proof of ${SRA}_{RLR}$ in Theorem \ref{thm: SRA RLR}.
\begin{itemize}
    

    \item $Hyb_2$: In this hybrid, the only difference with $Hyb_1$ is $\mathcal{S}$ substituting $\mathbf{x}_i,\omega_i,\mathbf{h}_i$ by randomly generated $\Tilde{\mathbf{x}}_i, \Tilde{\omega}_i,\Tilde{\mathbf{h}}_i$ for $U_H$ in round 1. $\mathcal{S}$ generates weight parameters $\Tilde{\omega}_i,\forall i \in U_H$, by a uniform distribution such that
    \begin{equation*}
        \sum_{i\in U_H} \Tilde{\omega}_i = \sum_{i\in U_H} \omega_i.
    \end{equation*}
    According to \eqref{eq: phi rfa}, the norm of $\Tilde{\mathbf{x}}_i$ is constrained by $||\Tilde{\mathbf{x}}_i||_2 = 1/\Tilde{\omega}_i$. $\mathcal{S}$ samples $\Tilde{\mathbf{x}}_i$ with fixed $l_2$-norm, such that
    \begin{equation*}
        \sum_{i\in U_H} \Tilde{\omega}_i \Tilde{\mathbf{x}}_i = \sum_{i\in U_H} \omega_i \mathbf{x}_i.
    \end{equation*}

    The shares and commitments are generated accordingly for the to the $\Tilde{\mathbf{x}}_i, \Tilde{\omega}_i$. The $\Tilde{\mathbf{h}}_i$ is generated, shared and committed by protocol.  Because $|U_M| \le T$, in the joint view, the adversary cannot know anything about the secret as long as the commitments are consistent with the shares. The indistinguishability is guaranteed by the property of LCC. 

    
\end{itemize}
\end{proof}

\begin{table*}[ht]
\centering
\caption{Per-iteration overhead comparison between \scheme and EIFFeL.}
\begin{tabular}{ccc}
\hline
                                         & \scheme                                           & EIFFeL                                                                     \\ \hline
User computation                         & $\mathcal{O}(\frac{N^2d}{K} + \frac{Nd}{K}\log{N})$ & $\mathcal{O} (NTd + d\log d)$                                                              \\ \hline
Server computation                       & $\mathcal{O}((N+\frac{d}{K})N\log^2{N}\log\log N)$                 & $\mathcal{O}\left((N+d)N \log ^2 N \log\log N +T d \min \left(N, T^2\right)\right)$ \\ \hline
User communication                       & $\mathcal{O}(\frac{Nd}{K})$                                   &           $\mathcal{O}(NTd)$             \\ \hline
Server communication                       & $\mathcal{O}(Nd)$                                   &           $\mathcal{O}(N^2 + Td\min(N,T^2))$                                                  \\ \hline
\end{tabular}
\vspace{-3mm}
\label{table: overhead}
\end{table*}

\subsection{Complexity Analysis}\label{sec: complexity}
The computational overhead and communication overhead per iteration are summarized in Table~\ref{table: overhead}, and the comparison with EIFFeL is presented. 
Recall that there are $N$ users with up to $T$ malicious users; the update dimension is $d$; $K$ is the partitioning parameter for LCC satisfying $K+T < N-1$; the number of circuit $f_\phi$'s multiplication gates is $\mathcal{O}(d)$.

\noindent \textbf{Computational overhead}. In round 1, the user generates secret shares for the update and commits to them by the complexity of $\mathcal{O}(N^2\frac{d}{K})$ \footnote{Pre-compute the Lagrange coefficients.}. 
The user also performs SNIP by the complexity of $\mathcal{O}(d+\frac{d}{K}\log{N})$ since the repeat-pattern circuit is used to reduce the overhead. Each circuit's proof and commitment takes $\mathcal{O}(N\log{N})$~\cite{von2003modern} with a total $\frac{d}{K}$ number of circuits. Shares and commitments of constant $\omega$ are generated in $\mathcal{O}(N(K+T))$, which is absorbed by the first term. The total complexity for each user in round 1 is $\mathcal{O}(\frac{N^2d}{K} + \frac{Nd}{K}\log{N})$.
In round 2, the user constructs shares for the polynomial identity test with complexity $\mathcal{O}(\frac{Nd}{K})$ for each prover. The secure computations are of the complexity $\mathcal{O}(\frac{d}{K})$. If each user computes the others' shares in a linear sequence, the total complexity for each user in round 2 is $\mathcal{O}(\frac{N^2d}{K})$. 
In round 3, the server decodes the proof by the complexity of $\mathcal{O}(N^2\log^2{N}\log\log N)$ and the aggregation by the complexity of $\mathcal{O}(\frac{d}{K}N\log^2{N}\log\log N)$~\cite{gao2003new}. The other operations are of the complexity $\mathcal{O}(N)$ that is absorbed by the other terms. Then, the total complexity for the server in round 3 is $\mathcal{O}((N+\frac{d}{K})N\log^2{N}\log\log N)$.

\noindent \textbf{Communication overhead}. 
In round 1, sharing the update and sharing the proof takes $\mathcal{O}(\frac{Nd}{K})$ and $\mathcal{O}(\frac{Nd}{K})$. All commitments are constant-size and, hence, absorbed by the first term. In round 2, the messages are shares of constant-size objects except the aggregations of $\phi$. $\phi$'s output dimension is $1$ and $d$ in RFA and RLR correspondingly.  Therefore, round 2 complexity is $\mathcal{O}(d)$ at maximum. Complexity of round 3 is simply  $\mathcal{O}(d)$. 

\noindent \textbf{Comparison with EIFFeL.}
In experiments, the model dimension significantly exceeds the number of users, i.e., $d>>N$, with both $T, K$ being of order $\mathcal{O}(N)$. The usage of LCC reduces the overhead in terms of $d$ to $\frac{d}{N}$. 
Therefore, \scheme's user and server computational overheads are $\mathcal{O}(Nd)$ and $\mathcal{O}(d\log^2{N}\log\log N)$, respectively, compared to EIFFeL's $\mathcal{O}(N^2d+d\log{d})$ and $\mathcal{O}\left(dN \log ^2 N \log\log N +N^2 d \right)$, respectively. For communication overhead,  the server communication remains consistent, while the user communication is substantially reduced from $\mathcal{O}(N^2d)$ to $\mathcal{O}({Nd/K})$ through the constant-size commitment scheme. 

\section{Experiments}\label{sec: exp}
We conduct experiments to demonstrate robust performance and computational overhead of \scheme instantiations. We consider several targeted and untargeted attacks on two datasets. All experiments are simulated on a single machine using Intel(R) Xeon(R) Gold 5118 CPU @ 2.30GHz and one NVIDIA GeForce RTX 4090 GPU. 
Our code is publicly available at this repository \href{https://github.com/tardisblue9/PriRoAgg/tree/main}{github.com/tardisblue9/PriRoAgg}.

\subsection{Setup}

\noindent \textbf{The untargeted and targeted attacks.} For different types of attacks, we choose various attack strategies among model update poisoning and backdoor attacks. Two trivial poisoning strategies are considered as baselines. The malicious users manipulate updates by adding Gaussian noise \cite{geomedian} or scaling by constants to their updates\cite{bhagoji2019analyzing}. Moreover, omniscient update poisoning strategies are also used, such as Min-Max and  Min-Sum attack \cite{shejwalkar2021manipulating}, where the attacker knows all benign updates to design its malformed update. Two advanced poisoning strategies are also evaluated: PoisonFL~\cite{xie2024poisonedfl} and Sine~\cite{kasyap2024sine}. 
For backdoor attacks, we consider conventional image trojan attacks \cite{liu2018trojaning} with two different pattern types and advanced backdoor attack, IBA\cite{nguyen2024iba}.
For performance comparisons, we use the global testing accuracy to evaluate the poisoning effectiveness, and then split the global test dataset into a clean test dataset and a backdoor test dataset to evaluate main task (main accuracy, MA $\uparrow$) and backdoor task accordingly (attack succuss rate, ASR $\downarrow$). For model poisoning, the main accuracy is also used for evaluation; a higher MA indicates better defense performance.
Additionally, we present the protocol's overhead with different parameter settings.


\begin{figure*}[ht]
	\centering
	\begin{minipage}{0.24\linewidth}
	     \includegraphics[width=\linewidth]{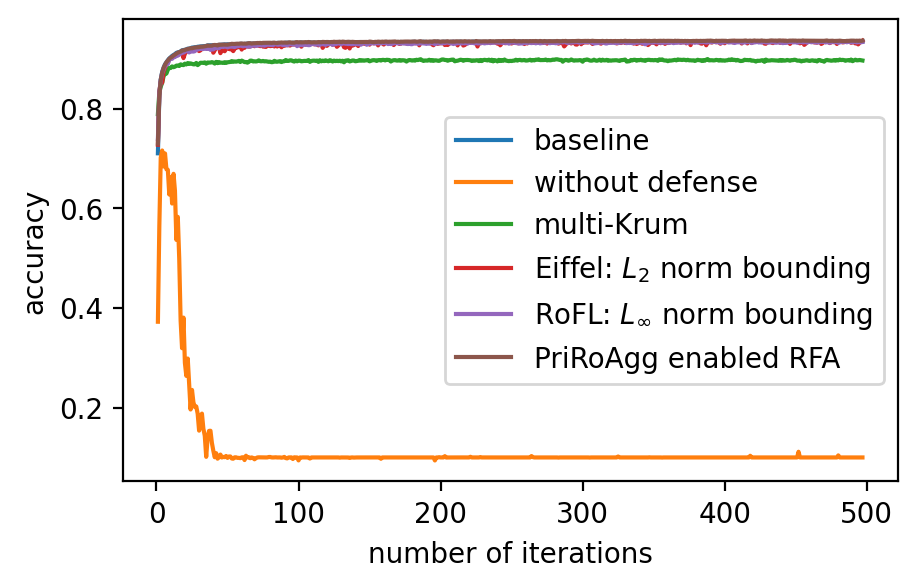}
	      \caption*{\small (a) Additive with FMNIST}
	\end{minipage}
	\begin{minipage}{0.24\linewidth}
	    \includegraphics[width=\linewidth]{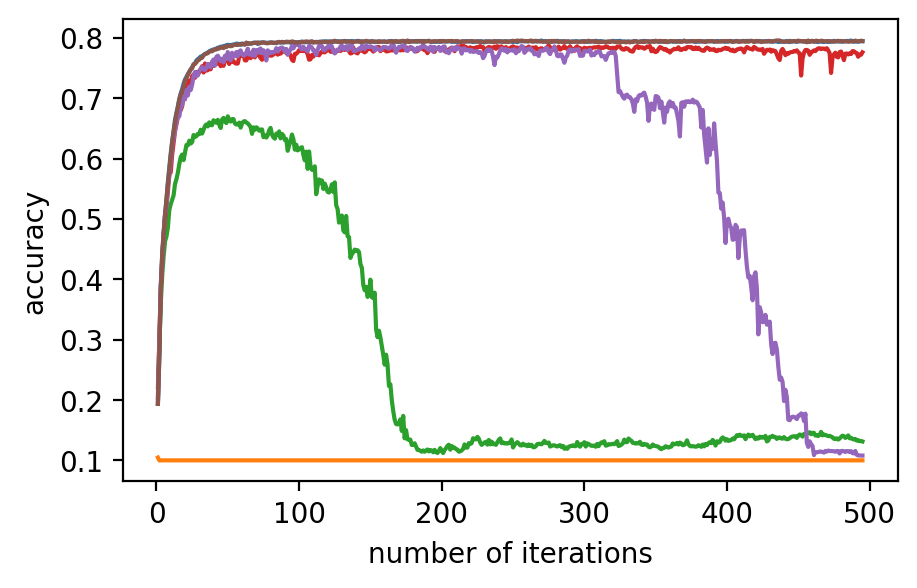}
	      \caption*{\small (b) Additive with CIFAR10}
        \end{minipage} 
	\begin{minipage}{0.24\linewidth}
	    \includegraphics[width=\linewidth]{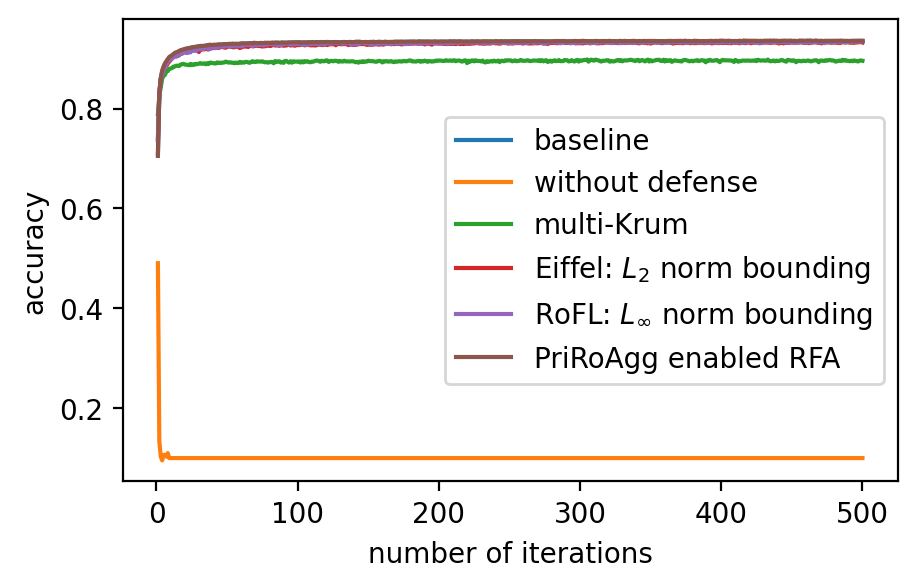}
	      \caption*{\small  (c) Scale with FMNIST }
         \end{minipage}
         \begin{minipage}{0.24\linewidth}
	    \includegraphics[width=\linewidth]{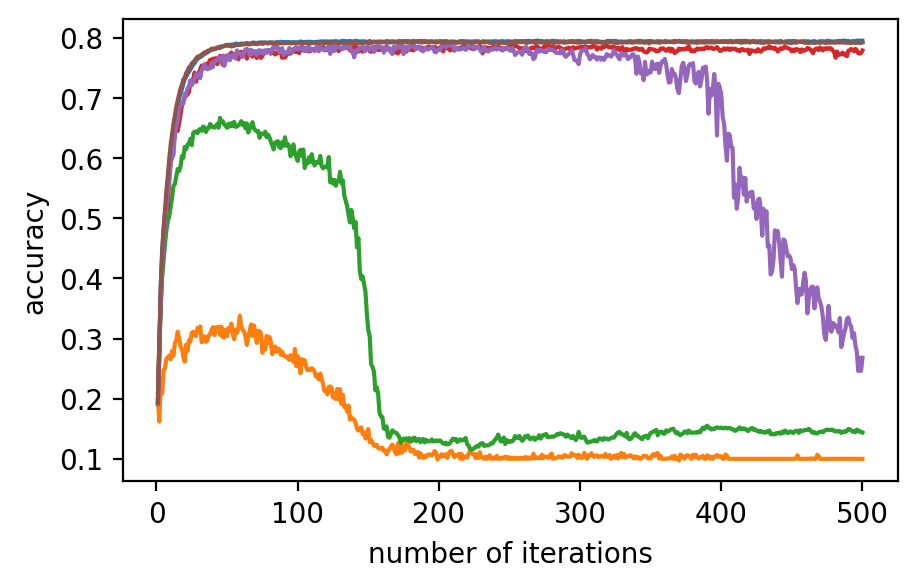}
	      \caption*{\small  (d) Scale with CIFAR10 }
         \end{minipage}\\
     \begin{minipage}{0.24\linewidth}
	     \includegraphics[width=\linewidth]{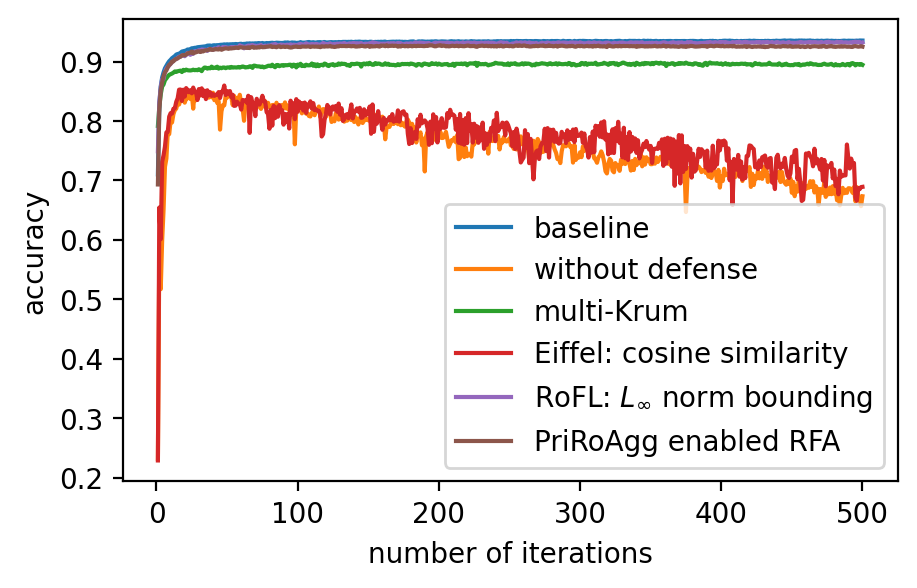}
	      \caption*{\small  (e) Min-Max with FMNIST}
	\end{minipage}
	\begin{minipage}{0.24\linewidth}
	    \includegraphics[width=\linewidth]{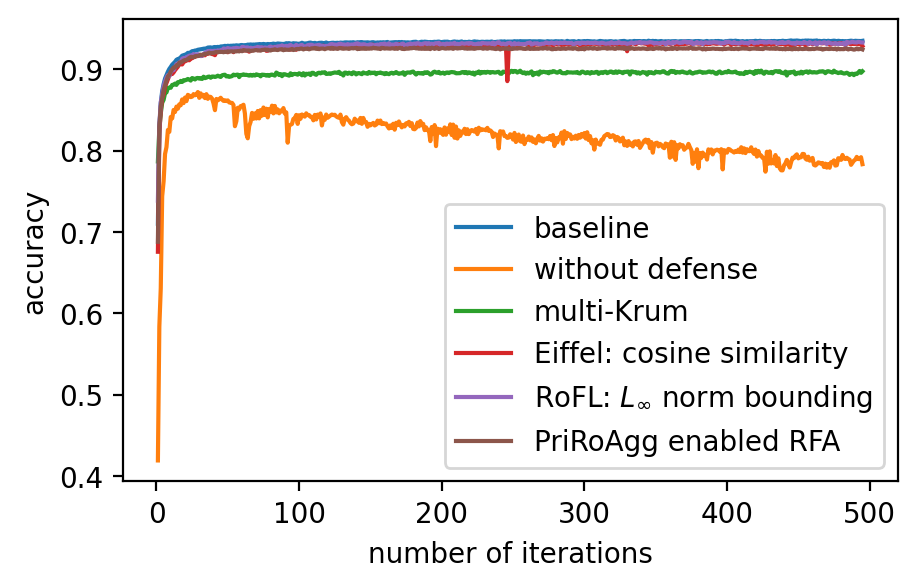}
	      \caption*{\small  (f) Min-Sum with FMNIST}
        \end{minipage} 
	\begin{minipage}{0.24\linewidth}
	    \includegraphics[width=\linewidth]{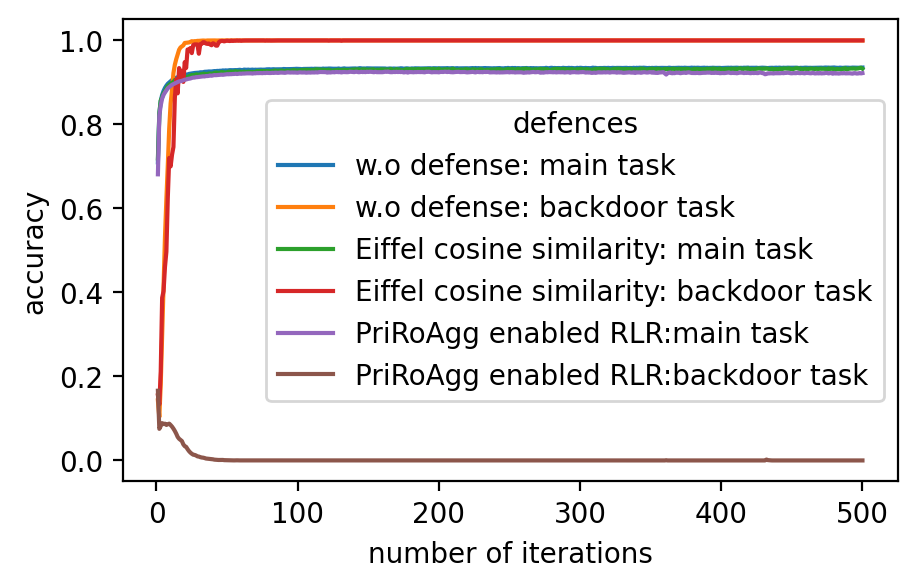}
	      \caption*{\small (g)Backdoor on FMNIST}
         \end{minipage}
         \begin{minipage}{0.24\linewidth}
	    \includegraphics[width=\linewidth]{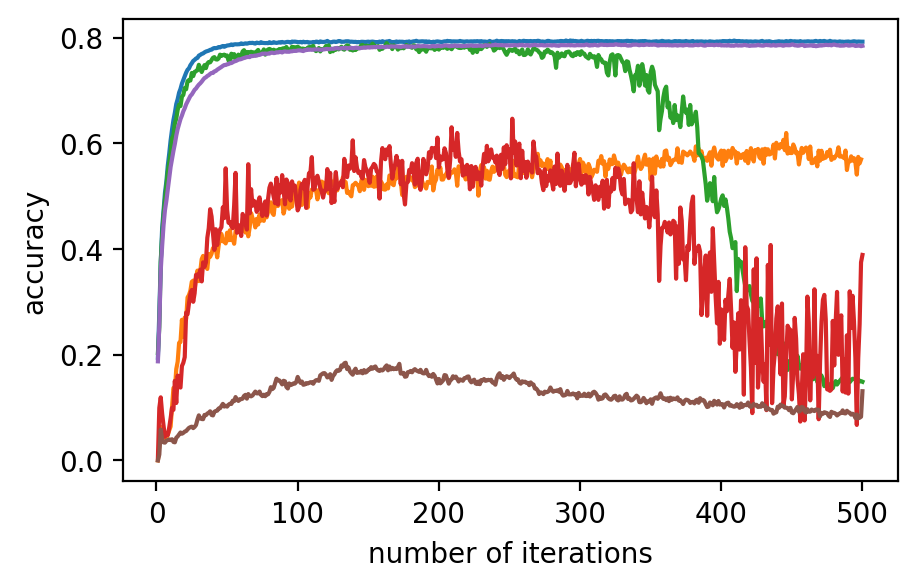}
	      \caption*{\small  (h)Backdoor on CIFAR10}
         \end{minipage}\\
    \begin{minipage}{0.24\linewidth}
	     \includegraphics[width=\linewidth]{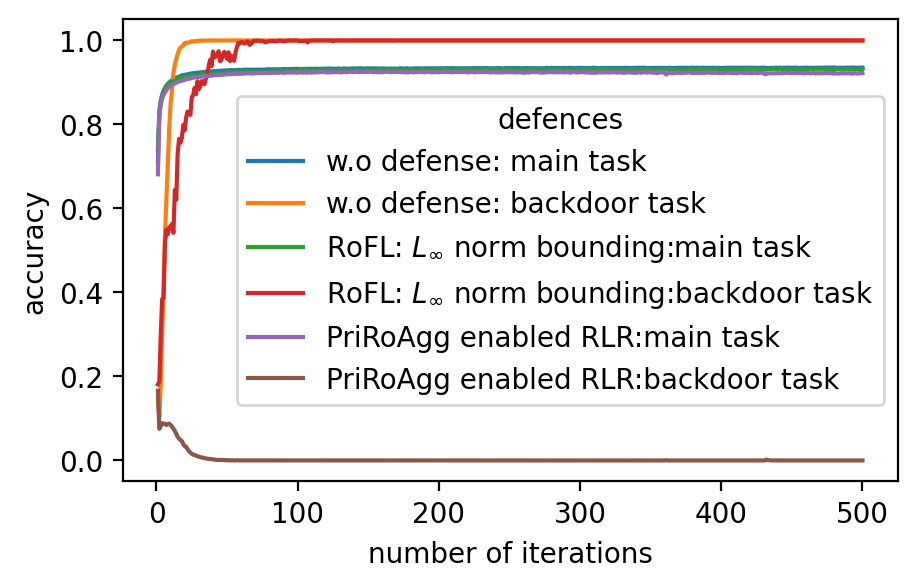}
	      \caption*{\small  (i) Backdoor on FMNIST}
	\end{minipage}
	\begin{minipage}{0.24\linewidth}
	    \includegraphics[width=\linewidth]{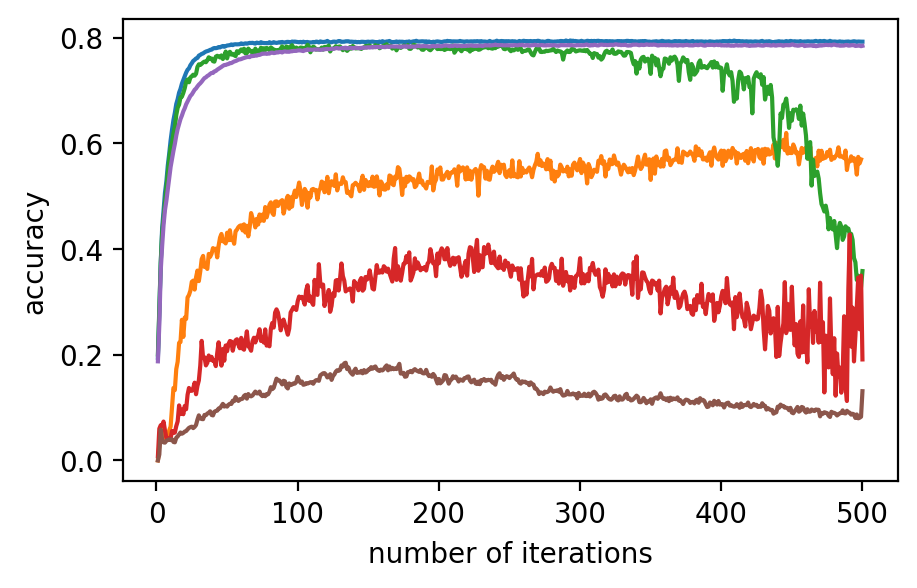}
	      \caption*{\small  (j) Backdoor on CIFAR10}
        \end{minipage} 
	\begin{minipage}{0.24\linewidth}
	    \includegraphics[width=\linewidth]{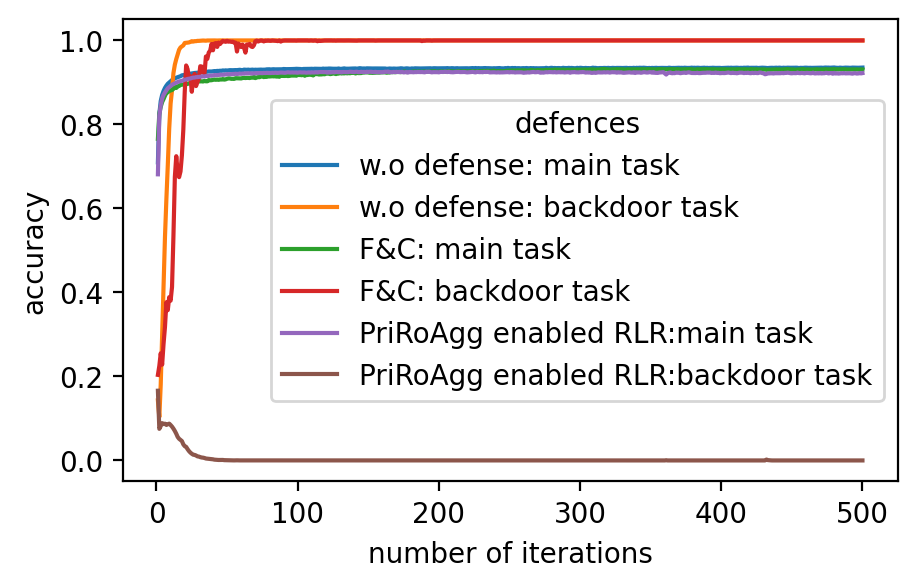}
	      \caption*{\small  (k) Backdoor on FMNIST}
         \end{minipage}
         \begin{minipage}{0.24\linewidth}
	    \includegraphics[width=\linewidth]{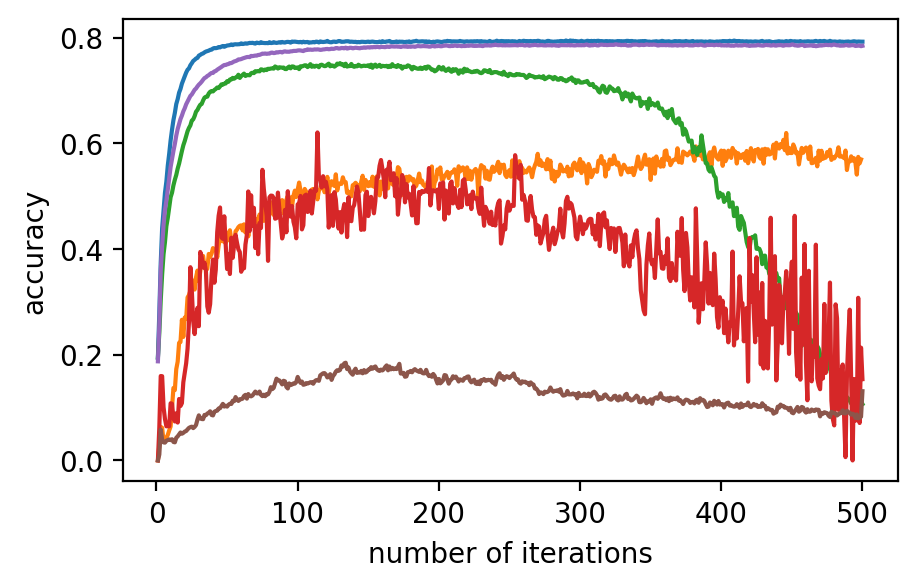}
	      \caption*{\small  (l) Backdoor on CIFAR10}
         \end{minipage}
\caption{\textbf{Learning dynamics} of different attacks and defenses on Fashion-MNIST and CIFAR10. (a)-(f)       are model poisoning attacks and (g)-(l) are backdoor attacks. Specifically, additive noise (a)(b), scaling (c)(d), Min-Max (e) and Min-Sum (f) attacks are demonstrated.    (g)(h), (i)(j) and (k)(l) show \scheme with RLR compared with EIFFeL, RoFL and EPPRFL, respectively.    Sub-figures with the same attack share the same legend.}
    \label{fig: attacks}
\vspace{-3mm}
\end{figure*}

\noindent\textbf{Baselines.} For defenses, we consider protocols that attempts to achieve robustness and privacy guarantees. BREA \cite{brea} and ByzSecAgg \cite{jahani2023byzantine} ensure robustness using the Byzantine-resilient algorithm, multi-Krum \cite{krum}.
RoFL \cite{lycklama2023rofl} and EIFFeL \cite{chowdhury2021eiffel}
utilize the validation functions to ensure the updates' robustness, where $L_2$ and $L_\infty$ norm bounding are used as the baselines.  We additionally use historical cosine similarity for EIFFeL's implementation to present the effect of different choices of validation functions. Note that RoFL and EIFFeL's validation functions rely on a public dataset and a given threshold, nevertheless, we allow their advantages with the dataset, which is isolated from the training dataset. For similarity and rank schemes in Trust-ZKFL~\cite{ma2024trusted}, they use Multi-Krum and SignGuard~\cite{xu2021signguard}, respectively. Consequently, similarity-based Trust-ZKFL corresponds to the Multi-Krum case mentioned before, while rank-based Trust-ZKFL is presented for comparison. 
For scarcity of backdoor-resilient privacy-preserving protocols in one-server setting, we consider another two-server protocol, EPPRFL, for backdoor attacks \cite{li2024efficiently}. It uses the F\&C algorithm adopted from \cite{karimireddy2021learning} to filter malicious updates based on historical updating information.

\noindent\textbf{Datasets}. We use Fashion-MNIST (FMNIST) \cite{xiao2017/online}, CIFAR-10 \cite{krizhevsky2009learning} 
to evaluate performance and test the end-to-end execution time. 
We assume the data is independent and identically distributed (i.i.d.)  unless otherwise specified.  

We choose $N=10$ for all Fashion-MNIST experiments and $N=40$ for all CIFAR-10 experiments, with $10\% - 30\%$ malicious users.     The security threshold is set to $T=\left \lceil 0.3N  \right \rceil $ and $K=\left \lfloor 0.7N \right \rfloor -1$ to maximize efficiency while ensuring compliance with privacy guarantee. Each experiment is run for $10$ times 
and the average results are reported. The number of global epoch is set to $500$.

\subsection{Results}

\begin{table*}[b]
    \centering
    \footnotesize
    \caption{Model Poisoning Performance}
    \label{table: performance poison}
    \begin{tabular}{ccccccccccccccc}\toprule
    & \multicolumn{7}{c}{Fashion-MNIST} & \multicolumn{7}{c}{CIFAR-10} \\
    \cmidrule(lr){2-8} \cmidrule(lr){9-15}
         & \hspace{-1mm}No Att.\hspace{-1mm} & \hspace{-1mm}Gaussian\hspace{-1mm} & \hspace{-1mm}Scale\hspace{-1mm} & \hspace{-1mm}MinMax\hspace{-1mm} & \hspace{-1mm}MinSum\hspace{-1mm} & \hspace{-1mm}PoisonFL\hspace{-1mm} & \hspace{-1mm}Sine\hspace{-1mm} & \hspace{-1mm}No Att.\hspace{-1mm} & \hspace{-1mm}Gaussian\hspace{-1mm} & \hspace{-1mm}Scale\hspace{-1mm} & \hspace{-1mm}MinMax\hspace{-1mm} & \hspace{-1mm}MinSum\hspace{-1mm} & \hspace{-1mm}PoisonFL\hspace{-1mm} & \hspace{-1mm}Sine\hspace{-1mm} 
         \\\midrule
         No Defence & \textbf{93.8}  & 10.0 & 10.0 & 10.0 & 10.0 & 10.0 & 10.0  & \textbf{79.5} & 10.0 & 10.0 & 10.0 & 10.0 & 10.0 & 10.0 \\
         Multi-Krum & - & 89.6 & 89.5 & 89.5 & 89.9 & 89.8 & 89.4 & - & 13.3 & 14.7 & 14.1 & 14.0 & 12.6 & 15.4 \\
         EIFFel ($L_2$) & - & 92.8  & \textbf{93.3} & 68.5 & \textbf{92.9} & \textbf{93.1} & 92.5 & - & 77.6 & 77.9 & 75.5 & 75.6 & 77.2 & 77.9\\
         EIFFel (cosine) & - & 10.0 & 10.0 & 68.9 & 92.6  & 10.0  & 10.0 & - & 10.0 & 10.0 & 10.0 & 10.0 & 10.0 & 10.0 \\
         RoFL ($L_\infty$) & - & \textbf{93.3} & \textbf{93.4} & 92.3 & 92.5 & \textbf{93.2} & 92.6 & - & 11.8 & 27.8 & 18.8 & 19.0 & 17.4 & 18.5\\
         SignGuard & - & \textbf{93.5} & 91.1 & 71.0 & 80.5& \textbf{93.3} & 10.0 & - & \textbf{79.5} & \textbf{78.8} &  10.0 & 11.3 & \textbf{78.7 }& 10.0 \\
         \scheme (ours) & - & \textbf{93.6}  & \textbf{93.6} & \textbf{92.6} & \textbf{93.1}  & 92.2 & \textbf{92.9} & - & \textbf{79.5} & \textbf{79.5} & \textbf{79.2} & \textbf{79.3} & \textbf{78.6} & \textbf{79.1}
        \\\bottomrule
    \end{tabular}
    \vspace{-3mm}
\end{table*}

\noindent\textbf{Performance}. 
We present the final accuracy for every attack-defense pair in Table ~\ref{table: performance poison}, ~\ref{table: performance backdoor}. We also notice the effectiveness of defenses changes throughout the training process, hence demonstrate the learning dynamics of selected experiments in Fig.~\ref{fig: attacks}.
The performance is measured by the testing accuracy of the global model and is presented as a percentage.
For model poisoning attacks, \scheme consistently yields robust results close to the no-attack baseline. The multi-Krum algorithm demonstrates convergence on FMNIST, although its accuracy is not as high as that of other convergent baselines. The performance of EIFFeL depends on the chosen validation predicate. On FMNIST, EIFFeL with cosine similarity poorly defends against all attacks except the Min-Sum attack, and EIFFeL with $L_2$-norm fails in the Min-Max attack.
RoFL generally performs well on FMNIST; however, it is no longer robust when using CIFAR-10. Trust-ZKFL with SignGuard experiences a performance degradation in omniscient attacks such as Min-Max, Min-Sum, and Sine attacks. As CIFAR-10 is a more complex dataset, multi-Krum, EIFFeL with cosine similarity, and RoFL fail to prevent model poisoning.
Instead of failing at the beginning of the training stage, RoFL experiences a drastic decrease around the $400$th iteration in Fig.~\ref{fig: attacks}(b). This behavior is pervasive when the learning dynamics change the distribution of updates. EIFFeL exhibits relatively good performance with a correct choice of validation predicate, but it falls short in terms of accuracy and stability compared to \scheme. Trust-ZKFL with SignGuard has comparable performance in simple attacks and PoisonFL attack, but \scheme consistently perform well in every attack. 

\begin{table}[t]
    \centering
    \caption{Backdoor Performance}
    \label{table: performance backdoor}
    \begin{tabular}{ccccccc}\toprule
    & \multicolumn{2}{c}{FMNIST} & \multicolumn{4}{c}{CIFAR10} \\
          & \multicolumn{2}{c}{Trojan A} & \multicolumn{2}{c}{Trojan B} & \multicolumn{2}{c}{IBA} \\
\cmidrule(lr){2-3} \cmidrule(lr){4-5} \cmidrule(lr){6-7} &  \textit{MA} $(\uparrow)$ & \textit{ASR} $(\downarrow)$ & \textit{MA} & \textit{ASR} & \textit{MA}  & \textit{ASR}  \\\midrule
         No Defence &  93.2 & 94.4 & 79.5 & 62.0 & 79.4 & 61.9 \\
         EIFFel ($L_2$) & 93.1 & 100.0 & 79.0 & 38.9 & 78.0 & 27.5\\
         EIFFel (cosine) & 93.4 & 100.0 & 50.1 & 40.7 & 22.1 & 23.0 \\
         RoFL ($L_\infty$) & 92.6  & 100.0 & 74.9 &42.7 & 59.2 & 25.4 \\
         F\&C & 93.3 & 100.0 & 50.5 & 46.2 &  37.4 & 24.3 \\
         \scheme (ours) & 92.2 & \textbf{0.0} & 78.7 & \textbf{13.1} &  78.9 & \textbf{12.4}
        \\\bottomrule
    \end{tabular}
    \vspace{-3mm}
\end{table}

For backdoor attacks, we evaluate the main task on the clean test dataset and the backdoor task on the poisoned test dataset. The learning dynamics are presented in (g)-(l) on Fig.~\ref{fig: attacks} and the final results are reported in Table~\ref{table: performance backdoor}. Generally, the compared defenses are effective in the early stages but fail in the long term.
As shown in (i) and (k), although RoFL and F\&C manage to slow down the implantation process on Fashion-MNIST, both ultimately succumb to the backdoor. Compared to them, \scheme with RLR performs significantly well against backdoor implantation on both datasets. It greatly supports the discussion in Section \ref{sec: challeges} about validation on the sole update not sufficiently supporting sophisticated model aggregation. The EIFFeL with cosine similarity has nearly no defensive capability against backdoor attacks, while EIFFeL and RoFL with norm bounding performs slightly better. However, they significantly reduces the main task performance when facing advanced attacks such as IBA. We consider multiple parameter settings for EPPRFL with F\&C and notice that F\&C highly relies on its early-stage updates' history; that is, if it cannot detect the malicious users in the first several rounds, it fails in the backdoor attack. Hence, F\&C exhibits notable instability shown in (l). 
In comparison, by allowing the overall statistics to aid aggregation, \scheme robustly defends against the backdoor in both backdoor attacks, while achieving the comparable performance in the main task as baseline.

\noindent \textbf{Complexity results}. We present the end-to-end runtime of \scheme with the instantiations of RLR and RFA. The instantiation with RFA is presented with the default setting of $N=10$ and $d=1000$. To provide a comprehensive and fair comparison, we implement EIFFeL and use identical encoding, decoding and ZKP algorithms across protocols. 
\begin{figure}[htb]
    \centering
    \includegraphics[width=\linewidth]{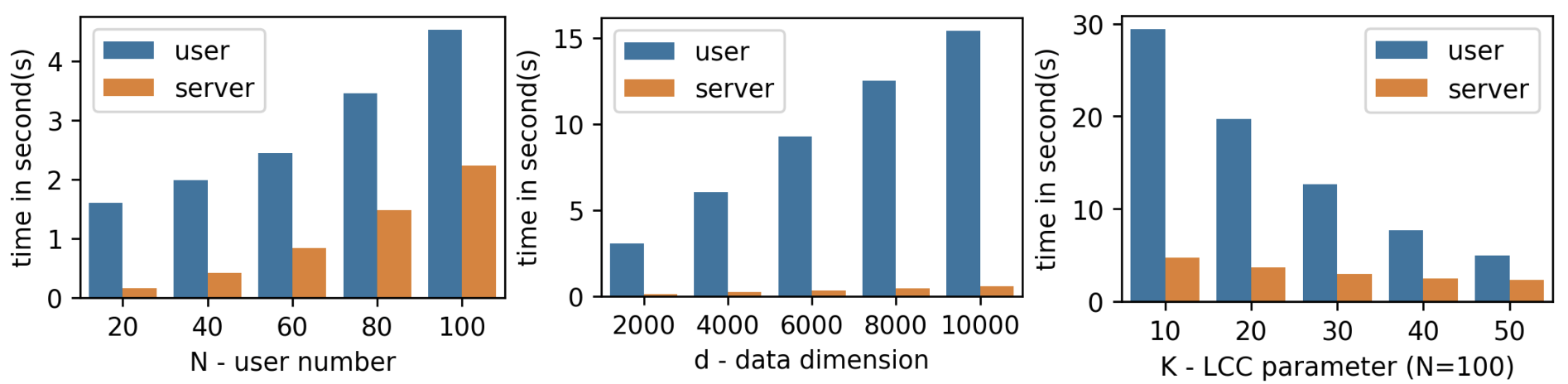}
    \caption{The user and the server overhead breakdown of \scheme per iteration. $N,d,K$ are the number of users, dimension of update, and LCC parameter, respectively.}
    \label{fig:overhead1}
    \vspace{-3mm}
\end{figure}
\begin{figure}[htb]
    \centering
    \includegraphics[width=\linewidth]{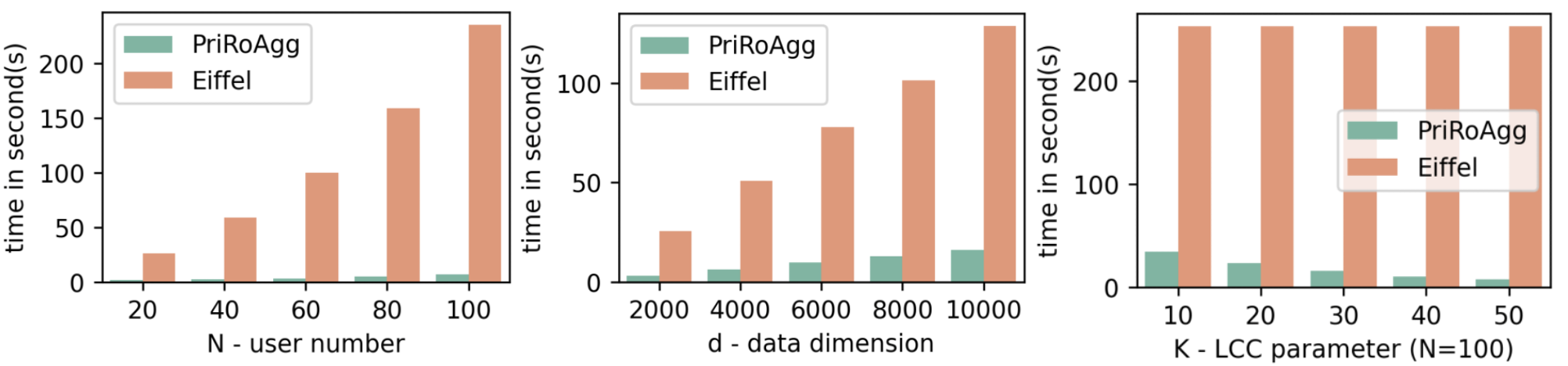}
    \caption{The end-to-end overhead of \scheme compared with EIFFeL. $N,d,K$ are the number of users, dimension of update, and LCC parameter, respectively.}
    \label{fig:overhead2}
\end{figure}

As shown in Fig.~\ref{fig:overhead1}, the time against the number of users and the dimension of update follows the complexity analysis in Table ~\ref{table: overhead}. The time cost increases linearly as it is primarily dominated by variables $d$ and $N$.For different LCC encoding parameter $K$ with fixed $T, N$ satisfying $K<N-T$, the larger LCC parameter $K$ shows higher encoding efficiency, where EIFFeL is approximately $K=1$ case.  In Fig.\ref{fig:overhead2}, compared with EIFFeL, \scheme benefits from higher encoding efficiency of LCC and repeated pattern circuit in improved SNIP. In summary, \scheme is superior in end-to-end efficiency, especially when $N$ is large allowing larger $K$ for LCC. This advantage is particularly evident in Fig.\ref{fig:overhead2}.

\section{Conclusion}
Under realistic consideration, FL requires protocols to provide both robustness and security at the same time. In this paper, we introduce a new notion of security for users' data privacy called aggregated privacy, 
and propose a novel framework \scheme to provide aggregated privacy to participants, while utilizing powerful robust algorithms to defend against both targeted and untargeted attacks. 
Two concrete protocols are proposed as instantiations of \scheme, with each addressing model poisoning and backdoor attacks. Security proofs and experiments are presented to demonstrate \scheme's superiority compared with previous works.

\section*{Acknowledgments}
This work is supported in part by the Fundamental Research Funds for the Central Universities (Grant No. 2242025K30025).
The work of T. Jahani-Nezhad and G. Caire was supported by the Gottfried Wilhelm Leibniz Prize 2021, awarded by the German Research Foundation (DFG).

\newpage

\appendices

\section{Extension for SNIP}\label{app:snip}

We introduce an extension of SNIP for repeated pattern circuits using LCC to reduce a term in the computational cost from $\mathcal{O}(d\log{d})$ to $\mathcal{O}(\frac{Nd}{K}\log{N})$. It is effective when $d>>N$ and $K$ is of order $\mathcal{O}(N)$, the term in \scheme becomes $\mathcal{O}(d\log{N})$. In both instantiations of \scheme, a special trait is that the users compute each entry of $\mathbf{x}_i$ in the same way. Specifically, RLR computes the $sign$ function for all entries of $\mathbf{x}_i$ and RFA computes the square for those. Utilizing such a trait, we split the repeated computations of $\mathbf{x}_i$ to computations of each coordinate of $\mathbf{x}_i$. 
All notations are directly inherited from Section.\ref{sec: SNIP}.

For demonstration, say we split $K$ arithmetic circuits $f^{(k)}, \forall k \in [K]$ from $f$ with the same topology structures. Each $f^{(k)}$ is applied on input $\mathbf{x}_i[k]$ and outputs $\mathbf{e}_i[k]$. For each $f^{(k)}(\mathbf{x}_i[k]) = \mathbf{e}_i[k]$, step 1 is executed in the same manner and results in $\mathbf{h}_i^{(k)}, k\in[K]$. Then, $\mathbf{h}_i^{(k)}, k\in[K]$ is viewed as the partitioned pieces from $\mathbb{F}_q^{2P-2}$ as in \eqref{encdoing goal}, on which \texttt{LCC.share} is applied with partitioning parameter $K$ instead of $1$. Therefore, the shares of $\mathbf{h}_i^{(k)}, k\in[K]$ are still $[\mathbf{h}_i]_j, \forall j\in[N]$. The rest of steps are performed identically except when the server reconstructs by \texttt{LCC.recon}, it should decode $K$'s many $0$ if the prover honestly executes the evaluation. 

\newpage

\section{Quantization technique}
\label{app: quantization}

In \scheme, we follow the same stochastic quantization method as ~\cite{brea}, which presents the theoretical guarantees of convergence. Given a real variable $x$ in range $(-1,1)$, it computes a probabilistic rounding $\mathcal{R}(x)$,
\begin{IEEEeqnarray}{c}\label{eq: quantization1}
    \mathcal{R}(x)=\left\{\begin{array}{@{}ll}
    \frac{\lfloor px\rfloor}{p},   &\mathrm{with~prob.}~1-(px-\lfloor px\rfloor)  \\
    \frac{\lfloor px\rfloor+1}{p},&\mathrm{with~prob.}~px-\lfloor px\rfloor
    \end{array}\right.
\end{IEEEeqnarray}
where $\lfloor x\rfloor$ denotes the largest integer no more than $x$, and $p$ is an integer that controls the quantization loss. 
Next, it maps the rounding to the finite field domain by $\mathcal{Q}$,
\begin{IEEEeqnarray}{c}\label{eq: quantization2}
    \mathcal{Q}(x)=\left\{\begin{array}{@{}ll}
    x,   &\mathrm{if}~x\geq 0  \\
    q+x,&\mathrm{if}~x<0
    \end{array}\right..
\end{IEEEeqnarray}
Then, we define the element-wise quantization function for \scheme to be: $\texttt{quantize}(\mathbf{x}) = \mathcal{Q}\Big(p\cdot\mathcal{R}(\mathbf{x})\Big)$. \texttt{dequantize($\cdot$)} is defined to be the inverse function that maps values in $\mathbb{F}_q$ to $\mathbb{R}$. 
In the experiment, $p$ is chosen between $10^3 - 10^4$ to achieve indistinguishable performance as real-domain execution.

\twocolumn

\end{document}